\newcommand{\titlename}{
On the local energy decay of solutions \\ of the Dirac equation  \\
in the non-extreme Kerr-Newman metric, I}
\DeclareMathOperator{\supp}{supp}
\DeclareMathOperator{\sign}{sign}
\newcommand{\rd}{d}
\newcommand{\Ltwo}{\mathscr L^2}
\newcommand{\whLtwo}{\widehat{\mathscr L}^2}
\newcommand{\Cinfty}{\mathscr C_0^\infty}
\newcommand{\pdiff}[2][]{\frac{\partial #1}{\partial #2}}
\newcommand{\diff}[2][]{\frac{d #1}{d #2}}
\newcommand{\im}{{\rm i}}
\newcommand{\e}{{\rm e}}
\renewcommand{\phi}{\varphi}
\newcommand{\id}{{I}}
\newcommand{\mD}{\mathcal D}
\newcommand{\mS}{\mathcal S}
\newcommand{\FA}{\mathfrak A}	% The formal differential expression 'A'
\newcommand{\FD}{\mathfrak D}	% The formal differential expression 'D'
\newcommand{\FH}{\mathfrak H}	% The formal differential expression 'H'
\newcommand{\Fh}{\mathfrak h}	% The formal differential expression 'h'
\newcommand{\FR}{\mathfrak R}	% The formal differential expression 'R'
\newcommand{\mA}{\mathcal A}	% The angular operator
\newcommand{\C}{\mathbb C}
\newcommand{\N}{\mathbb N}
\newcommand{\R}{\mathbb R}
\newcommand{\Z}{\mathbb Z}
\newcommand{\angLdtdphi}{\angL^{t,\phi}}	% operator with diff. op. w.r.t. t, \phi, \theta
\newcommand{\angL}{\mathfrak L}			% operator with diff. op. w.r.t. \phi, \theta
\newcommand{\Domk}{\mathfrak R}	% operator with diff. op. w.r.t. r only
\newcommand{\Ddtdphi}{\Domk^{t,\phi}}	% operator with diff. op. w.r.t. t and \phi
\newtheorem{theorem}{Theorem}[section]
\newtheorem{lemma}[theorem]{Lemma}
\newtheorem{proposition}[theorem]{Proposition}
\newtheoremstyle{myRemark}{8.0pt plus 2.0pt minus 4.0pt}{8.0pt plus 2.0pt minus 4.0pt}{}{}%
   {\bfseries}%	Thm head font
   {.}%		Punctuation after Thm head
   {5pt plus 1.0pt minus 1.0pt}%  Space after thm head
   {}
\theoremstyle{myRemark}
\newtheorem{remark}[theorem]{Remark}
\newtheorem{definition}[theorem]{Definition}
\begin{document}

% \tableofcontents
%--- TITLE --------------------------------------------------
\title{\titlename}
\date{\today} 
\author{%
M. Winklmeier\footnote{winklmeier@math.unibe.ch},
O. Yamada\footnote{yamadaos@se.ritsumei.ac.jp}
}

\maketitle

\begin{abstract} %%%{{{
   \noindent
   \the\textwidth
   We investigate the local energy decay of solutions of the Dirac equation in the non-extreme Kerr-Newman metric.
   First, we write the Dirac equation as a Cauchy problem and define the Dirac operator.
   It is shown that the Dirac operator is selfadjoint in a suitable Hilbert space.
   With the RAGE theorem, we show that for each particle its energy located in any compact region outside of the event horizon of the Kerr-Newman black hole decays in the time mean.
\end{abstract}
%%%}}}

%%%{{{ Introduction
\section{Introduction}

The Kerr-Newman metric is the most general stationary solution of Einstein's field equations and has the physical interpretation of a massive charged rotating black hole.
The physical parameters $M$, $Q$ and $a=J/M$ have the interpretation as mass, electric charge and angular momentum parameter.

As in flat spacetime, a spin-$\frac{1}{2}$ particle is described by a Dirac equation.
In the Kerr-Newman metric, the Dirac equation
% describing spin-$\frac{1}{2}$-particles 
is given by the coupled system of partial differential equations (see, e.g., \cite{page}, \cite{chandrasekhar}) 
\begin{align}\label{eq:DE:coupled}
   (\widehat\FR+\widehat\FA)\widehat\Psi = 0
\end{align}
where
\begin{align*}
   \widehat\FR\ :=\ \begin{pmatrix}
      \im mr& 0                       & \sqrt{\Delta}\Ddtdphi_+ & 0\\
      0& -\im mr                      & 0  & \sqrt{\Delta}\Ddtdphi_-\\
      \sqrt{\Delta}\Ddtdphi_-  & 0    & -\im mr  & 0\\
      0& \sqrt{\Delta}\Ddtdphi_+      & 0 & \im mr
   \end{pmatrix},
\end{align*}
\begin{align}\label{eq:angular:widehat}
%   \quad
   \widehat\FA\ :=\ \begin{pmatrix}
      -\FD& 0 & 0& \angLdtdphi_+\\
      0& \FD& -\angLdtdphi_-& 0 \\
      0& \angLdtdphi_+& -\FD& 0 \\
      -\angLdtdphi_- & 0& 0& \FD
   \end{pmatrix}
\end{align}
and 
\begin{align}
   \FD\ &:=\ am\cos\theta,\\
   \label{eq:RadialOperator}
   \Ddtdphi_\pm\ &:=\ \pdiff{r}\pm\frac{\im}{\Delta}\left[
      (r^2+a^2)\,\im \pdiff{t} + a\, \im \pdiff{\phi}+ eQr\right]\
   &&\text{on } (r_+,\infty),\\
   \angLdtdphi_\pm\ &:=\ \pdiff{\theta}+\frac{\cot\theta}{2}\mp\left[
      a\sin\theta\ \im \pdiff{t} + \frac{1}{\sin\theta}\ \im \pdiff{\phi}\right]
   &&\text{on } (0,\pi).
\end{align}

The physical parameters 
% $M$, $Q$ and $a=J/M$ have the interpretation as mass, electric charge and angular momentum per mass of the black hole,
% and 
$m$ and $e$ are the mass and the electric charge of the spin-$\frac{1}{2}$-particle which is described by the wave function $\widehat\Psi$.

The functions $\Delta$ and $\Sigma$ are given by
\begin{align*}
   \Delta(r) \, &=\, r^2 - 2Mr + a^2 + Q^2
   \, =\, (r-M)^2 + a^2 + Q^2 - M^2 \, , \\
   \Sigma(r, \theta) \, &=\, r^2 + a^2 \cos^2\theta.
\end{align*}

According to the black hole parameters $M$, $Q$ and $a$, three cases can arise:
If $a^2 + Q^2 - M^2 < 0$, then the function $\Delta$ has exactly two distinct zeros; this case is called the \emph{non-extreme Kerr-Newman metric}.
If $a^2 + Q^2 - M^2 = 0$, then the function $\Delta$ has exactly one zero; this case is referred to as the \emph{extreme Kerr-Newman metric}.
In both cases the metric is interpreted physically as the spacetime generated by a charged rotating massive black hole.
If $a^2 + Q^2 - M^2 > 0$, then the function $\Delta$ has no zeros; in this case the Kerr-Newman metric has no interpretation as the metric of a black hole.

\medskip
It is a remarkable fact that the Dirac equation~\eqref{eq:DE:coupled} can be separated into ordinary differential equations.
This was first shown by Chandrasekhar (\cite{Cha76}).% (\cite{chandrasekhar}).
The first step is to separate the dependence on the coordinates $t$ and $\phi$ by the ansatz
\begin{align}\label{eq:psi:r:theta:phi}
   \widehat\Psi(r,\theta,\phi, t)
   \, =\, 
   \e^{-\im \omega t} \widetilde\Psi(r,\theta,\phi),
\end{align}
$\omega \in\R$. 
We call $\omega$ an \emph{energy eigenvalue} of equation~\eqref{eq:DE:coupled} if it has a solution of the form~\eqref{eq:psi:r:theta:phi} which is square integrable in the sense explained below in section~\ref{sec:trafo}.%

The separation of the Dirac equation into ordinary differential equations is achieved by the ansatz
\begin{align*}
   \widehat\Psi(r,\theta,\phi,t) 
   \, = \,
   \e^{-\im\omega t} \e^{-\im\kappa\phi} 
   \begin{pmatrix}
      X_-(r) S_+(\theta) \\ X_+(r) S_-(\theta) \\
      X_+(r) S_+(\theta) \\ X_-(r) S_-(\theta)
   \end{pmatrix}\, 
\end{align*}
with $\kappa\in \Z+ \frac{1}{2}$.
One obtains two differential equations, the so-called radial equation for the radial coordinate $r$ and the so-called angular equation for the angular coordinate $\theta$.
Both equations can be written as eigenvalue problems for selfadjoint operators in appropriate Hilbert spaces.
The spectrum of the corresponding angular operator consists of simple and discrete eigenvalues which are unbounded from below and above (see, e.g., \cite{Wi05}).
Belgiorno and Martellini~\cite{BeMa99} showed that the spectrum of the radial equation comprises all of the real axis. 
Recently, it was shown that in the non-extreme Kerr-Newman metric no embedded eigenvalues exist.
In the extreme Kerr-Newman case, however, embedded eigenvalues can exist (see~\cite{Schmid} and \cite{WY06}).

\medskip
In this paper we investigate the local energy decay of solutions of the Dirac equation~\eqref{eq:DE:coupled}.
We assume that the non-extreme case holds, that is, there are no energy eigenvalues.
We do not employ the separation ansatz described above, but we rewrite the Dirac equation as a Cauchy problem.
After 
% separation of the angular coordinate $\phi$ and 
some transformations, we obtain the equation
\begin{align}
   \label{eq:introEvolution}
   \im\, \pdiff{t}
   \Psi(x,\theta,\phi, t)
   \ =\
   \mS(x,\theta)^{-1} \FH\Psi (x,\theta,\phi,t)
\end{align}
where $\mS(x,\theta)$ is a bounded and boundedly invertible $4\times 4$-matrix and $\FH$ is the formal Dirac operator associated with the Dirac equation \eqref{eq:DE:coupled}.
% where the operator in brackets on the left hand side is bounded and boundedly invertible (see equation~\eqref{eq:Cauchy}).
We show that the expression $\FH$ on the right hand side has a selfadjoint realisation $H$ in an appropriate Hilbert space (Theorem~\ref{thm:Cauchy}).
The operator $H$ can be written as the sum over partial wave operators $H^{(\kappa)}$, $\kappa\in\Z$.
In order to apply the RAGE theorem to the Dirac operator, we show in Theorem~\ref{thm:rellichproperty} that it enjoys a Rellich type property.
One of our main results in this paper is Theorem~\ref{thm:decay} which shows the local energy decay of the wave functions in the time mean:
Let $U^{(\kappa)}$ be the group associated with the skew-selfadjoint operator $-\im\, \mS^{-1} H^{(\kappa)}$.
Then, any function $f$ which is subjected to the time evolution given by \eqref{eq:introEvolution}
satisfies
\begin{align*}
      \lim_{T\to\infty}
      \frac{1}{2T}\int_{-T}^T \left[\int_{-R}^R\int_0^\pi
      \left\| U^{(\kappa)}(t) f(x,\theta) \right\|_{\C^4}^2 \,dx\,d\theta\right] \,dt
      \, =\, 0.
   \end{align*}
for every $R > 0$.
Since the expression in the square brackets is related to the energy of the particle described by $f$ at time $t$ in the region $\Omega_0 := (-R,R)\times(0,\pi)$, this result is related to the decay of the particle's energy in~$\Omega_0$.
The above result remains valid if the bounded electric potential $\frac{eQr}{r^2+a^2}$ given by the background metric is substituted by a possibly unbounded potential $q$ satisfying
\begin{align*}
   q\in C^1((r_+,\infty)),
   \quad
   \lim_{r\to\infty} \frac{q(r)}{r}\ \text{ exists,}
   \quad
   q'(r) = O(r),\ \text{ as }\ r\to\infty.
\end{align*}

\medskip
The local energy decay has also been investigated by Finster, Kamran, Smoller and Yau (\cite{FKNS03}). 
They restrict the Dirac operator to an annulus outside the event horizon and impose Dirichlet type boundary conditions so that the restricted operator has purely discrete spectrum.
Hence the propagator associated with it is a sum of projections.
Extending the radii of the annulus to the event horizon and infinity, respectively, the sum of the propagator representation becomes an integral.

Our approach, however, deals directly with the Dirac operator in the exterior of the black hole.
Using the fact that the partial wave operators $H^{(\kappa)}$ have no point spectrum, the local energy decay of wave functions with initially compact support follows by methods from scattering theory.
In particular, we do not impose any additional boundary conditions.

%%%}}}

\section{Transformation of the Dirac equation to a\\ Cauchy problem} %%%{{{
\label{sec:trafo}
To investigate the time evolution of solutions of the time-dependent Dirac equation we rewrite it as a Cauchy problem.

In the following, we use the Pauli matrices to abbreviate notation:
\begin{align*}
    \sigma_1 \, =\, 
    \begin{pmatrix} 0 & 1 \\ 1& 0 \end{pmatrix}, 
    \quad
    \sigma_2 \, =\,
    \begin{pmatrix} 0 & -\im \\ \im & 0 \end{pmatrix},
    \quad
    \sigma_3 \, =\, 
    \begin{pmatrix} 1 & 0 \\ 0 & -1 \end{pmatrix}.
\end{align*}

In addition, let
\begin{align*}
   I_2 \, =\,
   \begin{pmatrix} 1 & 0 \\ 0 & 1 
   \end{pmatrix},
   \quad
   \beta\, =\, 
   \begin{pmatrix} I_2 & 0 \\ 0 & -I_2 
   \end{pmatrix},
   \quad
   \Sigma_j\, =\, 
   \begin{pmatrix} \sigma_j & 0 \\ 0 & \sigma_j 
   \end{pmatrix},\
   j=1,2,3.
\end{align*}

For the $\phi$-coordinate we always assume the boundary condition
$\Psi(r,\theta,\phi) = \Psi(r,\theta, \phi+2\pi)$.
\smallskip

In the rest of the paper we use the notation
\begin{align*}
   \Omega_3 &:= (-\infty,\infty)\times(0,\pi)\times[0,2\pi),\\
   \Omega_2 &:= (-\infty,\infty)\times(0,\pi).
\end{align*}

\begin{theorem}\label{thm:Cauchy}
   Assume that the function $\Delta$ has at least one real zero \textup{(}that is, either the extreme or the non-extreme Kerr-Newman case holds\textup{)} and denote the largest zero by $r_+$.
   Define the new radial coordinate $x\in(-\infty,\, \infty)$ by
   \begin{align}
      \label{eq:rTrafo}
      \diff[x]{r}\, =\, \frac{r^2 +a^2}{\Delta(r)}.
   \end{align}
   To keep notation simple, we often write $r$ instead $r(x)$.

   A function $\widehat\Psi$ is a solution of the Dirac equation \eqref{eq:DE:coupled} if and only if 
   \begin{align}\label{eq:Cauchy}
      \im \pdiff{t} \Psi
      \,=\, \mS^{-1} \FH\, \Psi
   \end{align}
   where
   \begin{align}\label{eq:PsiTrafo}
      \Psi(x,\theta,\phi,t)\, &=\, (\sin\theta)^{-1/2}\widehat\Psi(r(x),\theta,\phi,t),
   \end{align}
   \begin{align*}
      \mS\, &=\, \id + \frac{a\sqrt{\Delta}\sin\theta}{r^2 + a^2}
      \begin{pmatrix} \sigma_2 & \\ & -\sigma_2 \end{pmatrix},
      \\
      \FH\, &=\, \FH_1 + \FH_2,\\
      \FH_1\, &=\, \begin{pmatrix} \Fh_1 & \\ & -\Fh_1 \end{pmatrix}, \quad
      \Fh_1\, =\, 
      -\sigma_3 \im\,\pdiff{r}
      - 
      \frac{\sqrt{\Delta}}{r^2+a^2}
      \left( 
      \sigma_1 \im \pdiff{\theta} 
      + 
      \sigma_2 
      \frac{1}{\sin\theta}
      \im\,\pdiff{\phi}
      \right),
      \\
      \FH_2\, &=\, 
	 - 
	 \frac{1}{r^2+a^2}
	 \begin{pmatrix} I_2 & \\ & I_2 
	 \end{pmatrix}
	 \Bigl(a \im\pdiff{\phi} + \e Qr \Bigr)
	 \\
	 &\phantom{=\ } 
	 - 
	 \frac{m r \sqrt{\Delta}}{r^2+a^2}
	 \begin{pmatrix} & I_2 \\ I_2 & 
	 \end{pmatrix}
	 +
	 \frac{am \sqrt{\Delta}\cos\theta}{r^2+a^2}
	 \begin{pmatrix}  & \im\, I_2 \\ -\im\, I_2  &
	 \end{pmatrix}
      .
   \end{align*}

   Moreover, the operator $H_0$ defined by
   \begin{align}\label{eq:widehatH}
      H_0\Psi\, =\, \FH\Psi,
      \quad
      \mD(H_0) = 
      \Cinfty (\Omega_3)^4
   \end{align}
   is essentially selfadjoint in the Hilbert space $\Ltwo(\Omega_3)^4$ and the operator 
   \begin{align*}
      \mS^{-1} H_0
   \end{align*}
   is essentially selfadjoint in the Hilbert space
   \begin{align*}
      \Ltwo_\mS(\Omega_3)^4 := \Ltwo( \Omega_3 )^4
      \quad\text{ with scalar product }\
      (\ \cdot\ , \ \cdot\ )_{\mS}
   \end{align*}
   given by
   \begin{align*}
   (\Psi,\, \Phi)_{\mS} = 
   (\Psi,\, \mS\Phi) = 
   \int_{\R} \int_0^\pi \int_0^{2\pi}
   \langle \Psi(x,\theta,\phi),\, \mS(x,\theta)\, \Phi(x,\theta,\phi) \rangle_{\C^4}
    \rd x \,\rd\theta \, \rd\phi;
   \end{align*}
   accordingly, the norm on $\Ltwo_\mS(\Omega_3)$ will be denoted by $\|\,\cdot\,\|_\mS$.
\end{theorem}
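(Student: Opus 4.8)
The plan is to carry out three steps: verify the equivalence of \eqref{eq:DE:coupled} and \eqref{eq:Cauchy} (and the stated properties of $\mS$) by a direct computation; prove essential selfadjointness of $H_0$ in $\Ltwo(\Omega_3)^4$ by a partial--wave reduction to two--dimensional first order systems; and finally deduce essential selfadjointness of $\mS^{-1}H_0$ in $\Ltwo_\mS(\Omega_3)^4$ from the essential selfadjointness of $H_0$.

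For the equivalence I would multiply $(\widehat\FR+\widehat\FA)\widehat\Psi=0$ on the left by $\frac{\sqrt{\Delta}}{r^2+a^2}$, which is positive on $(r_+,\infty)$: then $\sqrt{\Delta}\,\Ddtdphi_\pm$ acquires the radial part $\frac{\Delta}{r^2+a^2}\pdiff{r}$, which the coordinate change \eqref{eq:rTrafo} turns into $\pdiff{x}$, while the substitution \eqref{eq:PsiTrafo}, via the conjugation relating $\pdiff{\theta}+\tfrac12\cot\theta$ to $\pdiff{\theta}$, removes the $\tfrac12\cot\theta$ from $\angLdtdphi_\pm$ and flattens the angular measure to $d\theta$. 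Collecting the $\pdiff{t}$--terms, those coming from $\widehat\FR$ contribute $\pm\pdiff{t}$ in the appropriate entries and those coming from $\widehat\FA$ contribute exactly the entries carrying the factor $\frac{a\sqrt{\Delta}\sin\theta}{r^2+a^2}$; after a permutation of the components these assemble into $\im\,\mS\,\pdiff{t}$ and everything else into $-\FH=-(\FH_1+\FH_2)$, yielding \eqref{eq:Cauchy}. Since each operation is invertible, the two equations are equivalent. That $\mS$ is bounded and boundedly invertible follows because $\mathrm{diag}(\sigma_2,-\sigma_2)$ has eigenvalues $\pm1$ and $0\le\frac{a\sqrt{\Delta}\sin\theta}{r^2+a^2}\le\tfrac12$ on $(r_+,\infty)$: indeed $r^2-\Delta=2Mr-a^2-Q^2$ is increasing in $r$ and equals $r_+^2>0$ at $r_+$ (from $\Delta(r_+)=0$, i.e.\ $2Mr_+=r_+^2+a^2+Q^2$), hence $\sqrt{\Delta}<r$ and $\frac{a\sqrt{\Delta}\sin\theta}{r^2+a^2}<\frac{ar}{r^2+a^2}\le\tfrac12$. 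Thus $\mS$ has spectrum in $[\tfrac12,\tfrac32]$, is uniformly positive, and $(\,\cdot\,,\,\cdot\,)_\mS$ is equivalent to $(\,\cdot\,,\,\cdot\,)$.

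Next, symmetry of $H_0$ on $\Cinfty(\Omega_3)^4$ is a routine integration by parts: the matrices in front of the first derivatives are Hermitian with real smooth coefficients, the zeroth order part of $\FH_2$ is Hermitian, and compact support inside the open set $\Omega_3$ annihilates all boundary contributions at $x=\pm\infty$ and $\theta=0,\pi$. For essential selfadjointness I would use that all coefficients are $\phi$--independent, so $H_0$ commutes with rotations in $\phi$; expanding in $\phi$--modes gives $\Ltwo(\Omega_3)^4=\bigoplus_\kappa\Ltwo(\Omega_2)^4$ and $H_0=\bigoplus_\kappa H_0^{(\kappa)}$, where $H_0^{(\kappa)}$ arises from $\FH$ by replacing $\im\pdiff{\phi}$ with the corresponding real constant and has domain $\Cinfty(\Omega_2)^4$ (finite $\phi$--sums of such modes being a core, since Fourier partial sums of a $\Cinfty(\Omega_3)^4$--function converge in the graph norm). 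Hence $H_0$ is essentially selfadjoint iff every $H_0^{(\kappa)}$ is. On a fixed mode, $\FH_2^{(\kappa)}$ is a \emph{bounded} selfadjoint multiplication operator — every coefficient occurring in $\FH_2$ is bounded on $\Omega_2$, using $\sqrt{\Delta}\sim r$ as $x\to+\infty$ and $\sqrt{\Delta}\to0$ as $x\to-\infty$ — so by Kato--Rellich $H_0^{(\kappa)}$ is essentially selfadjoint provided $\FH_1^{(\kappa)}$ is, and since $\FH_1^{(\kappa)}=\begin{pmatrix}\Fh_1^{(\kappa)}&\\&-\Fh_1^{(\kappa)}\end{pmatrix}$ this reduces to essential selfadjointness of the two--component first order operator $\Fh_1^{(\kappa)}=-\sigma_3\im\pdiff{x}-\frac{\sqrt{\Delta}}{r^2+a^2}\bigl(\sigma_1\im\pdiff{\theta}+\sigma_2\tfrac{\kappa}{\sin\theta}\bigr)$ on $\Cinfty(\Omega_2)^2$ (in the radial variable $x$ of \eqref{eq:rTrafo}).

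For $\Fh_1^{(\kappa)}$ the argument would be: it is symmetric and elliptic in the interior — its leading symbol $\sigma_3\xi+\frac{\sqrt{\Delta}}{r^2+a^2}\sigma_1\eta$ has square $\bigl(\xi^2+\tfrac{\Delta}{(r^2+a^2)^2}\eta^2\bigr)\id$, hence is invertible off the zero section — so by elliptic regularity every $u$ in its maximal domain is smooth in $\Omega_2$, and it remains to establish the limit--point property at the two kinds of singular ends so that the maximal and minimal operators coincide. As $x\to\pm\infty$ the coefficient $\frac{\sqrt{\Delta}}{r^2+a^2}\to0$, so there $\Fh_1^{(\kappa)}$ is a vanishing perturbation of the constant coefficient Dirac operator $-\sigma_3\im\pdiff{x}$, which is limit point, and a standard energy/Gr\"onwall estimate forbids $L^2$--solutions of $(\Fh_1^{(\kappa)})^\ast u=\pm\im u$ near $x=\pm\infty$. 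As $\theta\to0$ (and symmetrically $\theta\to\pi$) the only singular coefficient is $\frac{\sqrt{\Delta}}{r^2+a^2}\cdot\frac{\kappa}{\sin\theta}$, of size $\sim\mathrm{const}\cdot\kappa/\theta$, attached to $\sigma_2$ and transverse to the kinetic term $\sigma_1\im\pdiff{\theta}$; since the relevant angular momentum parameter satisfies $|\kappa|\ge\tfrac12$, the two local solution germs of the model equation behave like $\theta^{\pm\kappa}$, exactly one of which is square integrable for $d\theta$, so $\theta=0$ is in the limit--point case, and a slicing argument in $x$ upgrades this to the statement that every $u$ in the maximal domain already lies in the closure of $\Cinfty$ near the poles. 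Combining, $\Fh_1^{(\kappa)}$, hence $H_0$, is essentially selfadjoint. I expect this pole analysis to be the main obstacle: one must check carefully that the coupling $\kappa/\theta$ is strong enough (here $|\kappa|\ge\tfrac12$) that no boundary condition is needed, and pass from the one--dimensional limit--point classification to the genuinely two--dimensional operator $\Fh_1^{(\kappa)}$.

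Finally, for $\mS^{-1}H_0$ in $\Ltwo_\mS(\Omega_3)^4$: since $\mS$ is a smooth invertible matrix function, $\mS^{-1}\Cinfty(\Omega_3)^4=\Cinfty(\Omega_3)^4$, so $\mS^{-1}H_0$ is well defined, and it is symmetric for $(\,\cdot\,,\,\cdot\,)_\mS$ because $(\mS^{-1}H_0\Psi,\Phi)_\mS=(H_0\Psi,\Phi)=(\Psi,H_0\Phi)=(\Psi,\mS^{-1}H_0\Phi)_\mS$. The same pairing, together with the essential selfadjointness of $H_0$ just proved, identifies the $\Ltwo_\mS$--adjoint of $\mS^{-1}H_0$ with $\mS^{-1}\overline{H_0}$ on the domain $\mD(\overline{H_0})$. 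If now $\mS^{-1}\overline{H_0}u=\mp\im u$ with $u\in\mD(\overline{H_0})$, i.e.\ $\overline{H_0}u=\mp\im\,\mS u$, then pairing with $u$ in $\Ltwo(\Omega_3)^4$ gives $(\overline{H_0}u,u)=\mp\im\,(\mS u,u)$, whose left side is real while the right side is purely imaginary and, $\mS$ being uniformly positive, nonzero unless $u=0$. Hence both deficiency indices of $\mS^{-1}H_0$ in $\Ltwo_\mS(\Omega_3)^4$ vanish, so $\mS^{-1}H_0$ is essentially selfadjoint there, which completes the proof.
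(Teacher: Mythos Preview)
Your overall architecture matches the paper's proof closely: direct computation for the equivalence \eqref{eq:DE:coupled}$\Leftrightarrow$\eqref{eq:Cauchy}, the bound $\bigl|\frac{a\sqrt{\Delta}\sin\theta}{r^2+a^2}\bigr|\le\tfrac12$ giving equivalence of the norms, Fourier decomposition in $\phi$, and the reduction of $H_0^{(\kappa)}$ to $\Fh_1^{(\kappa)}$ via the bounded selfadjoint perturbation $\FH_2^{(\kappa)}$. Your deficiency--index argument for passing from essential selfadjointness of $H_0$ to that of $\mS^{-1}H_0$ in $\Ltwo_\mS$ is also correct (the paper states this step more tersely, simply invoking the equivalence of norms).

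The substantive divergence, and the place where your argument has a genuine gap, is the essential selfadjointness of the two--component operator $\Fh_1^{(\kappa)}$ on $\Cinfty(\Omega_2)^2$. You propose interior elliptic regularity plus a ``limit--point'' analysis at $x\to\pm\infty$ and at the poles $\theta\to 0,\pi$, glued together by an unspecified ``slicing argument''. The problem is that Weyl's limit--point/limit--circle dichotomy is a one--dimensional tool; there is no off--the--shelf theorem that turns limit--point behaviour of one--dimensional slices into essential selfadjointness of a genuinely two--dimensional first--order system on a domain with several singular ends. You yourself flag this as ``the main obstacle'', and indeed the sketch as written does not close.

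The paper fills exactly this gap by performing a \emph{second} separation of variables. It introduces the angular operator
\[
   \mA_\kappa=\begin{pmatrix}0&1\\-1&0\end{pmatrix}\pdiff{\theta}+\frac{\kappa}{\sin\theta}\begin{pmatrix}0&1\\1&0\end{pmatrix}
\]
on $\Ltwo((0,\pi))^2$, which (by results quoted from \cite{Wi05}) is essentially selfadjoint on $\Cinfty((0,\pi))^2$ with compact resolvent and a complete orthonormal system of eigenfunctions $(g_m^\kappa)_{m\in\Z\setminus\{0\}}$. After a unitary conjugation $W$ one has $W^{-1}\Fh_1^{(\kappa)}W=\sigma_3\,\im\,\pdiff{x}-\frac{\sqrt{\Delta}}{r^2+a^2}\,\mA_\kappa$, so any $\phi\in\ker\bigl((\Fh_1^{(\kappa)})^*\mp\im\bigr)$ can be expanded as $\phi(x,\theta)=\sum_m\xi_m(x)\,g_m^\kappa(\theta)$. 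Using the symmetry $\sigma_3 g_m^\kappa=-g_{-m}^\kappa$, the equation decouples into a family of one--dimensional $2\times2$ Dirac systems in $x$ for $(\xi_\mu,\xi_{-\mu})$, each a bounded perturbation of $\bigl(\begin{smallmatrix}0&\im\,d/dx\\ \im\,d/dx&0\end{smallmatrix}\bigr)$ on $\Ltwo(\R)^2$, which is essentially selfadjoint; hence all $\xi_\mu=0$ and the deficiency spaces are trivial. This is the missing ingredient your ``slicing argument'' would have to reproduce: the angular eigenbasis is what rigorously converts the two--dimensional deficiency problem into a countable family of one--dimensional ones. Your heuristic at the poles (solutions $\sim\theta^{\pm\kappa}$, only one in $L^2$ for $|\kappa|\ge\tfrac12$) is in fact the reason $\mA_\kappa$ itself is essentially selfadjoint, but the paper imports that as a known fact rather than trying to lift the pointwise limit--point classification directly to the product domain.
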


\smallskip
Let $H$ be the closure of $H_0$.
Then $\mS^{-1}H$ is the closure of $\mS^{-1}H_0$, and we call $\mS^{-1}H$ the \emph{time-independent Dirac operator} in the Kerr-Newman metric.

\begin{remark}
   \label{remark:EquivNorm}%
   Observe that for each $(x,\theta)\in(-\infty,\infty)\times(0,\pi)$ the matrix $\mS(x,\theta)$ is bounded and boundedly invertible since 
   \begin{align*}
      \left| \frac{a\sqrt{\Delta}\sin\theta}{r(x)^2+a^2} \right|
      \, \le\, \frac{ |a| r(x) }{r(x)^2+a^2}
      \, \le\, \frac{ 1 }{2}.
   \end{align*}
   Therefore, the norms on $\Ltwo(\Omega_3)^4$ and $\Ltwo_\mS(\Omega_3)^4$ are equivalent.
\end{remark}

\begin{proof}[Proof of Theorem~\ref{thm:Cauchy}]
   
First we rearrange equation~\eqref{eq:DE:coupled} such that all time derivatives are on the left hand side and all other terms are on the right hand side, thus we obtain
\begin{align*}
   \makebox[30pt][l]{$\displaystyle
   \left[ 
      \begin{pmatrix} & -\im\, \sigma_3\\ \im\, \sigma_3& \end{pmatrix} 
      + \frac{a\sqrt{\Delta}\sin\theta}{r^2+a^2}
      \begin{pmatrix} & \sigma_1 \\ \sigma_1 &
      \end{pmatrix} 
   \right]\, \im\, \pdiff{t} \widehat\Psi(x,\theta,\phi,t)=
   $} &
   \\[1ex]
   & \begin{aligned}[t]
      \Biggl[&
	 \begin{pmatrix} &I_2 \\ I_2 & 
	 \end{pmatrix} 
	 \pdiff{x}
	 +
	 \frac{\sqrt{\Delta}}{r^2+a^2}
	 \begin{pmatrix} & \im\, \sigma_2\\ \im\, \sigma_2 & 
	 \end{pmatrix} 
	 \pdiff{\theta}
	 \\[1ex]
	 & 
	 + \frac{\sqrt{\Delta}}{(r^2+a^2)\sin\theta}
	 \begin{pmatrix} &-\sigma_1\\ -\sigma_1 & 
	 \end{pmatrix} 
	 \im \pdiff{\phi}
	 + 
	 \frac{1}{r^2+a^2}
	 \begin{pmatrix} & \im\, \sigma_3\\ -\im\,\sigma_3 & \\
	 \end{pmatrix} 
	 \left(a \im\pdiff{\phi} + \e Qr\right)
	 \\[1ex]
	 &
	 + \frac{\im m r \sqrt{\Delta}}{r^2+a^2}
	 \begin{pmatrix}  \sigma_3 & \\ & -\sigma_3 
	 \end{pmatrix}
	 +
	 \frac{ am\sqrt{\Delta}\cos\theta}{r^2+a^2}
 	 \begin{pmatrix}  -\sigma_3 & \\ & -\sigma_3
	 \end{pmatrix}
	 \Biggr]\, \widehat\Psi(x,\theta,\phi,t).
      \end{aligned}
\end{align*}

Multiplication from the left by 
$\Bigl(\begin{smallmatrix} &-\im\, \sigma_3\\ \im\, \sigma_3 \end{smallmatrix}\Bigr)$ 
yields 
\begin{align*} %\label{eq:quasiCauchy,0}
   \makebox[30pt][l]{
      $\displaystyle
      \left[\id + \frac{a\sqrt{\Delta}\sin\theta}{r^2+a^2}\ 
	 \begin{pmatrix}  \sigma_2\\ & -\sigma_2
	 \end{pmatrix}
      \right] 
      \im \pdiff{t} \widehat\Psi(x,\theta,\phi,t)
      $}
   & \\
   &=\ 
   \begin{aligned}[t]
     \Biggl[ &
	\begin{pmatrix} -\sigma_3 \\ &\sigma_3
	 \end{pmatrix}
	 \im\,\pdiff{x}
	 + 
	 \frac{\sqrt{\Delta}}{r^2+a^2}
	 \begin{pmatrix} -\sigma_1 & \\ &\sigma_1
	 \end{pmatrix} 
	 \im \pdiff{\theta}
	 \\[2ex]
	 & 
	 + 
	 \frac{\sqrt{\Delta}}{(r^2+a^2)\sin\theta}
	 \begin{pmatrix} -\sigma_2 & \\ &\sigma_2
	 \end{pmatrix} 
	 \im\,\pdiff{\phi}
     - 
	 \frac{1}{r^2+a^2}
	 \begin{pmatrix} I_2 & \\ & I_2 
	 \end{pmatrix}
	 \Bigl(a \im\pdiff{\phi} + \e Qr \Bigr)
	 \\[2ex]
	 &
	 - 
	 \frac{m r \sqrt{\Delta}}{r^2+a^2}
	 \begin{pmatrix} & I_2 \\ I_2 & 
	 \end{pmatrix}
	 +
	 \frac{am\sqrt{\Delta}\cos\theta}{r^2+a^2}
	 \begin{pmatrix}  & \im\, I_2 \\ -\im\, I_2  &
	 \end{pmatrix}
	 \Biggr]\, \widehat\Psi(x,\theta,\phi,t)
  \end{aligned}
  \\[2ex]
  &=\ 
  ( \FH_1 + \FH_2) \Psi (x,\theta,\phi,t),
\end{align*}
thus the first assertion of the theorem is proved.

By Remark~\ref{remark:EquivNorm} the norms on $\Ltwo(\Omega_3)^4$ and $\Ltwo_\mS(\Omega_3)^4$ are equivalent.
Hence, $H_0$ is essentially selfadjoint in $\Ltwo(\Omega_3)^4$ if and only if $\mS^{-1}H_0$ is so in $\Ltwo_\mS(\Omega_3)^4$.
Therefore, it suffices to show that $H_0$ is essentially selfadjoint.

Since $\pdiff{\phi}$ with the boundary condition stated at the beginning of this section has the complete system of eigenfunctions $(\e^{-\im\kappa\phi})_{\kappa\in\Z+1/2}$ and all the differential expressions in the equation above commute with $\pdiff\phi$, each $\Psi\in\Ltwo(\Omega_3\times\R)$ has a representation
\begin{align*}
   \Psi(x,\theta,\phi,t)\ =\  
   \sum_{\kappa\in\Z+1/2} 
   \e^{-\im\kappa\phi} \Psi^{(\kappa)}(x,\theta,t).
\end{align*}
$\Psi$ satisfies \eqref{eq:Cauchy} if and only if each $\Psi^{(\kappa)}$ satisfies 
\begin{align}\label{eq:Cauchykappa}
   \im \pdiff{t} \Psi^{(\kappa)}
   \,=\, \mS^{-1} \FH^{(\kappa)}\, \Psi^{(\kappa)}
\end{align}
where $\FH^{(\kappa)}$, $\FH_1^{(\kappa)}$ and $\FH_2^{(\kappa)}$ are obtained from $\FH$, $\FH_1$ and $\FH_2$ by substituting $\im\pdiff\phi$ with $\kappa$.
Moreover, $H_0$ is essentially selfadjoint if and only if each partial wave operator $H_0^{(\kappa)}$ (the restriction of $H_0$ to the partial wave space represented by $\e^{-\im\kappa\phi}$) is essentially selfadjoint.

In Proposition~\ref{prop:selfadjoint} below, it is shown that $h_{1,0}^{(\kappa)}$, defined by
\begin{align}\label{eq:h10} 
   h_{1,0}^{(\kappa)}\Psi^{(\kappa)} 
   = \Fh_1^{(\kappa)}\Psi^{(\kappa)},
   \quad
   \mD(h_{1,0}^{(\kappa)}) = \Cinfty(\Omega_2)^2,
\end{align}
is essentially selfadjoint in $\Ltwo(\Omega_2)^2$ with the usual scalar product.

Since both $\FH_1$ and $\mS^{-1}$ have diagonal structure and  $\FH_2^{(\kappa)}$ defines a bounded selfadjoint operator on $\Ltwo_\mS(\Omega_2)^4$, the assertion on the essential selfadjointness of $H_0^{(\kappa)}$, $\kappa\in\Z$, and therefore of $H_0$ follows.
\end{proof}

\begin{lemma}\label{lemma:Asummary}
   For $\kappa\in\Z+1/2$ let
   \begin{align*}
      \mA_\kappa := 
      \begin{pmatrix} 0 & 1 \\ -1 & 0 
	 \end{pmatrix} \pdiff{\theta}
      + \frac{\kappa}{\sin\theta} 
      \begin{pmatrix} 0 & 1 \\ 1 & 0 \end{pmatrix} 
   \end{align*}
   and define the unitary matrix
   \begin{align*}
      W := \begin{pmatrix} 0 & 1 \\ \im & 0 \end{pmatrix}.
   \end{align*}

   The operator $\mA_\kappa$ is the angular part of the Dirac equation \eqref{eq:DE:coupled} arising from Chandrasekhar's separation process in the case $a=0$.

   \begin{enumerate}

      \item \label{item:Asummary:i}
      For each $\kappa\in\Z+1/2$, the operator $\mA_\kappa$ with domain $\Cinfty((0,\pi))^2$ is essentially selfadjoint in   
      $\Ltwo((0,\pi),\, \rd\theta)^2$.
      We denote its closure again by $\mA_\kappa$. 
      It is compactly invertible and its spectrum consists of simple eigenvalues only, given by
      \begin{align*}
	 \lambda_{m}^\kappa
	 :=\sign(m) \left(|\kappa| -\textstyle\frac{1}{2} + |m|\right), \ \ m = \pm 1,\, \pm 2,\, \cdots. 
      \end{align*}
      We denote the corresponding normalized eigenfunctions by $g_m^\kappa$.

      \item \label{item:Asummary:ii}
      For every $m\in\Z\setminus\{0\}$ we have\ \
      $\lambda_{-m}^\kappa = - \lambda_m^\kappa$ and $\sigma_3 g_m^\kappa = -g_{-m}^\kappa$.

      \item \label{item:Asummary:iii}
      $\displaystyle 
      W\mA_\kappa W^{-1} \, = \, 
      W^{-1}\mA_\kappa W \, = \, \im\, \sigma_3 \mA_{-\kappa}$.

      \item \label{item:Asummary:iv}
      $\displaystyle 
      W^{-1}\Fh_1^{(\kappa)} W 
      \, =\, 
      W^{-1}\sigma_3 W \Bigl(-\im \pdiff{x} \Bigr) - \frac{\sqrt{\Delta}}{r^2+a^2}\, \mA_\kappa
      \, =\, 
      \sigma_3 \Bigl(\im \pdiff{x} \Bigr) - \frac{\sqrt{\Delta}}{r^2+a^2}\, \mA_\kappa $.

   \end{enumerate}
\end{lemma}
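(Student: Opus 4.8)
Parts \ref{item:Asummary:iii} and \ref{item:Asummary:iv} are algebraic and I would settle them first. Writing $\mA_\kappa=\im\sigma_2\,\pdiff{\theta}+\frac{\kappa}{\sin\theta}\,\sigma_1$ (note $\bigl(\begin{smallmatrix}0&1\\-1&0\end{smallmatrix}\bigr)=\im\sigma_2$ and $\bigl(\begin{smallmatrix}0&1\\1&0\end{smallmatrix}\bigr)=\sigma_1$), one checks the conjugation relations $W\sigma_1W^{-1}=\sigma_2$, $W\sigma_2W^{-1}=\sigma_1$, $W\sigma_3W^{-1}=-\sigma_3$; since $W^2=\im\,\id$ we have $W^{-1}=-\im W$, so conjugation by $W$ from the left and from the right agree, which gives both equalities of \ref{item:Asummary:iii} at one stroke, and inserting the same three relations into the differential expression $\Fh_1^{(\kappa)}=-\sigma_3\,\im\,\pdiff{x}-\frac{\sqrt\Delta}{r^2+a^2}\bigl(\sigma_1\,\im\,\pdiff{\theta}+\frac{\kappa}{\sin\theta}\,\sigma_2\bigr)$ (with the change of variable \eqref{eq:rTrafo} built into the radial term, cf.\ the proof of Theorem~\ref{thm:Cauchy}) yields \ref{item:Asummary:iv} on the core $\Cinfty(\Omega_2)^2$, hence for the closures. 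For \ref{item:Asummary:ii}: $\lambda_{-m}^\kappa=-\lambda_m^\kappa$ is read off the formula, and the only extra fact needed for the eigenfunction relation is that $\sigma_3$ anticommutes with $\mA_\kappa$ (because it anticommutes with both $\sigma_1$ and $\sigma_2$), so that $\mA_\kappa(\sigma_3 g)=-\sigma_3\mA_\kappa g$; combined with the spectral description in \ref{item:Asummary:i} this shows $\sigma_3$ maps the (one-dimensional) eigenspace of $\lambda$ unitarily onto that of $-\lambda$, and after fixing the phases of $g_m^\kappa$ for $m\ge1$ arbitrarily one \emph{defines} $g_{-m}^\kappa:=-\sigma_3 g_m^\kappa$, which is then normalized and also satisfies $\sigma_3 g_{-m}^\kappa=-g_m^\kappa$.

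The substance is \ref{item:Asummary:i}. Essential self-adjointness I would obtain from Weyl's limit-point/limit-circle theory for the first-order $2\times2$ system $\mA_\kappa$ on the bounded interval $(0,\pi)$, whose two endpoints are singular. Rewriting $\mA_\kappa g=zg$ as $g'=\frac{\kappa}{\sin\theta}\,\sigma_3 g-\im z\,\sigma_2 g$, a Frobenius analysis at $\theta\to0$ and $\theta\to\pi$ shows that, to leading order near each endpoint, the two fundamental solutions behave like the powers $(\tan\tfrac\theta2)^{\pm\kappa}$ (the term $-\im z\,\sigma_2 g$ being a bounded perturbation that does not move the indicial exponents); since $|\kappa|\ge\tfrac12$, exactly one of these powers is square integrable near $0$ and exactly one near $\pi$, so the system is limit point at both ends, the deficiency indices of $\mA_\kappa$ on $\Cinfty((0,\pi))^2$ vanish, and $\mA_\kappa$ is essentially self-adjoint. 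The same computation with $z=0$ decouples into $-g_1'+\frac{\kappa}{\sin\theta}g_1=0$ and $g_2'+\frac{\kappa}{\sin\theta}g_2=0$, so $g_1\propto(\tan\tfrac\theta2)^{\kappa}$ and $g_2\propto(\tan\tfrac\theta2)^{-\kappa}$, neither of which lies in $\Ltwo((0,\pi))$ for $|\kappa|\ge\tfrac12$; hence $\mA_\kappa g=0$ has no nonzero $\Ltwo$ solution, $0\notin\sigma(\mA_\kappa)$, and $\mA_\kappa$ is boundedly invertible.

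For discreteness and the explicit spectrum I would pass to the square. A short computation with the Pauli relations gives $\mA_\kappa^{\,2}=-\partial_\theta^2+\frac{\kappa^2}{\sin^2\theta}-\sigma_3\,\frac{\kappa\cos\theta}{\sin^2\theta}$, a diagonal pair of one-dimensional Schrödinger operators on $(0,\pi)$ whose potentials are smooth in the interior and have, at each endpoint, an inverse-square leading term with coefficient $\ge-\tfrac14$; such operators on a bounded interval are well known to be bounded below and to have compact resolvent (taking the Friedrichs realisation at any limit-circle endpoint), and since $\mA_\kappa$ is self-adjoint this forces $\mA_\kappa$ itself to have compact resolvent, so $\sigma(\mA_\kappa)$ is discrete. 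Moreover $\mA_\kappa g=\lambda g$ forces $g_1$ to solve the scalar Sturm--Liouville equation $-g_1''+\frac{\kappa^2-\kappa\cos\theta}{\sin^2\theta}\,g_1=\lambda^2 g_1$ on $(0,\pi)$; the substitution $y=\cos\theta$ and the ansatz $g_1=(1-y)^{|\kappa|/2}(1+y)^{(|\kappa|+1)/2}w(y)$ (with the weight exponents dictated by the domain of $\mA_\kappa$ at the two endpoints) turn this into the Jacobi differential equation, whose square-integrable solutions are the Jacobi polynomials and occur precisely for $\lambda^2=(|\kappa|-\tfrac12+|m|)^2$, $m\in\Z\setminus\{0\}$, the polynomial having degree $|m|-1$. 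Since each such $\lambda^2$ is a simple eigenvalue of this scalar problem and $\lambda\ne0$, each of $\pm(|\kappa|-\tfrac12+|m|)$ is a simple eigenvalue of $\mA_\kappa$ ($g_2$ being recovered from $g_1$ by $\lambda g_2=-g_1'+\frac{\kappa}{\sin\theta}g_1$); this is exactly the list $\lambda_m^\kappa$, visibly unbounded above and below. (For $a=0$, as here, this diagonalisation is classical; see, e.g., \cite{Wi05} and the references there.)

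The step demanding the most care is the borderline case $|\kappa|=\tfrac12$: there the \emph{squared} operator $\mA_\kappa^{\,2}$ is only in the limit-circle case at the endpoints, so one must verify that the boundary condition it inherits from the (limit-point) operator $\mA_\kappa$ is exactly the Friedrichs condition under which the Jacobi polynomials form a complete orthonormal system --- equivalently, that among the two Frobenius solutions of Jacobi's equation only the polynomial one produces a $g_1$ lying in the domain of $\mA_\kappa^{\,2}$. Everything else is routine bookkeeping with the Pauli matrices and with classical facts about the Jacobi/hypergeometric equation.
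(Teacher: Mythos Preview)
Your treatment of \ref{item:Asummary:ii}--\ref{item:Asummary:iv} is correct and coincides with the paper's: the paper simply says \ref{item:Asummary:iii} and \ref{item:Asummary:iv} are ``easily verified by direct computation'' and proves \ref{item:Asummary:ii} via the same anticommutation $\sigma_3\mA_\kappa=-\mA_\kappa\sigma_3$. Your remark that the relation $\sigma_3 g_m^\kappa=-g_{-m}^\kappa$ really amounts to a \emph{phase convention} (fix $g_m^\kappa$ for $m\ge1$, then set $g_{-m}^\kappa:=-\sigma_3 g_m^\kappa$) is a point the paper leaves implicit.

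The difference is in \ref{item:Asummary:i}. The paper does not prove it at all but refers to \cite{Wi05}; you instead give a self-contained argument: limit-point analysis of the $2\times2$ system at both endpoints to get essential self-adjointness, the explicit $z=0$ solutions $(\tan\frac\theta2)^{\pm\kappa}$ to exclude $0$ from the spectrum, and then reduction of $\mA_\kappa^{\,2}$ to a diagonal pair of Schr\"odinger/Jacobi problems to obtain compact resolvent and the eigenvalue list. This is a genuinely different route from merely citing the reference, and it has the advantage of being self-contained and of making transparent why the spectrum is exactly $\pm(|\kappa|-\tfrac12+|m|)$; the cost is that one must handle the borderline case $|\kappa|=\tfrac12$ with the care you flag (the squared operator is limit circle there, and one has to check that the boundary behaviour inherited from $\mD(\mA_\kappa)$ selects the Friedrichs/polynomial solution). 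Both approaches are valid; yours supplies the content that the paper outsources to \cite{Wi05}.
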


\begin{proof}
   For \ref{item:Asummary:i}, we refer to \cite{Wi05}.
   Statements \ref{item:Asummary:iii} and \ref{item:Asummary:iv} are easily verified by direct computation.
   To prove \ref{item:Asummary:ii} we observe that all eigenvalues of $\mA_\kappa$ are simple and that 
   \begin{align*}
      \mA_\kappa\, \sigma_3 g_m^\kappa
      \, &=\,
      \sigma_3 \sigma_3 \mA_\kappa\, \sigma_3 g_m^\kappa
      \, =\,
      \begin{pmatrix} 1 & 0 \\ 0 & -1 \end{pmatrix}
      \begin{pmatrix} 1 & 0 \\ 0 & -1 \end{pmatrix}
      \mA_\kappa \begin{pmatrix} 1 & 0 \\ 0 & -1 \end{pmatrix} g_m^\kappa
      \\[2ex]
      & =\,
      \begin{pmatrix} 1 & 0 \\ 0 & -1 \end{pmatrix}
      (-\mA_\kappa)\, g_m^\kappa
      \, =\,
      -\lambda_m^\kappa \begin{pmatrix} 1 & 0 \\ 0 & -1 \end{pmatrix} g_m^\kappa
      \, =\,
      -\lambda_m^\kappa\sigma_3 g_m^\kappa.
      \, =\,
      \lambda_{-m}^\kappa\sigma_3 g_m^\kappa.
   \end{align*}

\end{proof}

\begin{proposition}\label{prop:selfadjoint}
   The minimal operator $h_{1,0}^{(\kappa)}$ defined in \eqref{eq:h10} by
   \begin{align*}
      h_{1,0}^{(\kappa)} \psi &= \Fh_{1}^{(\kappa)} \psi 
      =
      \left( 
	 -\sigma_3 \im\,\pdiff{r}
	 - \frac{\sqrt{\Delta}}{r^2+a^2}
	 \left( \sigma_1 \im \pdiff{\theta} 
	    + \sigma_2 \frac{1}{\sin\theta} \im\,\pdiff{\phi}
	 \right)
      \right) \psi,
      \\
      \mD(h_{1,0}^{(\kappa)}) &= \Cinfty((-\infty,\infty)\times (0,\pi))^2
   \end{align*}
is essentially selfadjoint in the space $\Ltwo(\Omega_2)^4$.
\end{proposition}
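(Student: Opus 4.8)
The plan is to peel off the coupling between the radial variable $x$ and the angular variable $\theta$ as far as possible, to diagonalise in $\theta$, and thereby to reduce the claim to the essential selfadjointness of a family of one-dimensional Dirac operators on the line with bounded potentials. First I would conjugate by the constant unitary matrix $W$ of Lemma~\ref{lemma:Asummary}: since $W$ is a constant matrix it maps $\Cinfty(\Omega_2)^2$ onto itself and commutes with all multiplications by functions, so $h_{1,0}^{(\kappa)}$ is essentially selfadjoint in $\Ltwo(\Omega_2)^2$ if and only if $T_0:=W^{-1}h_{1,0}^{(\kappa)}W$ with $\mD(T_0)=\Cinfty(\Omega_2)^2$ is. By Lemma~\ref{lemma:Asummary}\,\ref{item:Asummary:iv} one has, on $\Cinfty(\Omega_2)^2$,
\begin{align*}
   T_0\, =\, \sigma_3\Bigl(\im\,\pdiff{x}\Bigr)\, -\, f(x)\,\mA_\kappa\, ,
   \qquad
   f(x)\, :=\, \frac{\sqrt{\Delta(r(x))}}{r(x)^2+a^2}\, ,
\end{align*}
where $f$ is continuous and bounded on $\R$: it tends to $0$ as $x\to-\infty$ because $\Delta(r(x))\to\Delta(r_+)=0$, it is $O(1/r)$ as $x\to+\infty$, and in between it is controlled by the estimate of Remark~\ref{remark:EquivNorm}, which gives $|a|\,f(x)\le\tfrac12$ (and for $a=0$ simply $f(x)=\sqrt{\Delta(r(x))}/r(x)^2$). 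Since $T_0$ is a formally selfadjoint first order differential operator on the open set $\Omega_2$, its adjoint $T_0^*$ is the associated maximal operator; it therefore suffices to show that the deficiency spaces $\ker(T_0^*\mp\im)$ are trivial.

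To that end I would diagonalise the angular part. By Lemma~\ref{lemma:Asummary}\,\ref{item:Asummary:i} the closure of $\mA_\kappa$ is selfadjoint with compact resolvent, so its normalised eigenfunctions $(g_m^\kappa)_{m\in\Z\setminus\{0\}}$ form an orthonormal basis of $\Ltwo((0,\pi))^2$; moreover $\mA_\kappa g_{-m}^\kappa=-\lambda_m^\kappa g_{-m}^\kappa$ and, crucially, $\sigma_3 g_m^\kappa=-g_{-m}^\kappa$ by item~\ref{item:Asummary:ii}. Hence, although $\sigma_3$ and $\mA_\kappa$ do not commute, each two-dimensional subspace $\mathfrak K_m:=\operatorname{span}\{g_m^\kappa,g_{-m}^\kappa\}$, $m\ge1$, is invariant under both of them, and $\Ltwo((0,\pi))^2=\bigoplus_{m\ge1}\mathfrak K_m$. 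In the basis $(g_m^\kappa,g_{-m}^\kappa)$ of $\mathfrak K_m$ one has $\sigma_3|_{\mathfrak K_m}=-\sigma_1$ and $\mA_\kappa|_{\mathfrak K_m}=\lambda_m^\kappa\sigma_3$, so that on $\Ltwo(\R)\otimes\mathfrak K_m\cong\Ltwo(\R)^2$ the operator $T_0$ becomes
\begin{align*}
   \widetilde h_m\, :=\, -\im\,\sigma_1\,\frac{d}{dx}\, -\, \lambda_m^\kappa\, f(x)\,\sigma_3\, .
\end{align*}
For each fixed $m$ the principal part $-\im\,\sigma_1\,\frac{d}{dx}$ is, after conjugating $\sigma_1$ into $\sigma_3$ by a constant unitary matrix, unitarily equivalent to $\operatorname{diag}\bigl(-\im\tfrac{d}{dx},\,\im\tfrac{d}{dx}\bigr)$, which is essentially selfadjoint on $\Cinfty(\R)^2$; the remaining term $\lambda_m^\kappa f\sigma_3$ is a \emph{bounded} selfadjoint multiplication operator. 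Hence $\widetilde h_m$ is essentially selfadjoint on $\Cinfty(\R)^2$, and its closure $\overline{\widetilde h_m}$ is selfadjoint.

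Finally I would assemble the pieces. Given $u\in\ker(T_0^*\mp\im)$, write $u=\sum_{m\ge1}\bigl(c_m\,g_m^\kappa+c_{-m}\,g_{-m}^\kappa\bigr)$ with $c_{\pm m}\in\Ltwo(\R)$. Testing the identity $T_0^*u=\pm\im\,u$ against $\phi(x)\,g_{m,n}(\theta)$ and against $\phi(x)\,g_{-m,n}(\theta)$, where $\phi\in\Cinfty(\R)$ and $g_{\pm m,n}\in\Cinfty((0,\pi))^2$ approximate $g_{\pm m}^\kappa$ in the graph norm of $\mA_\kappa$ (which is possible precisely because $\mA_\kappa$ is essentially selfadjoint on $\Cinfty((0,\pi))^2$), and then letting $n\to\infty$, one finds that $w_m:=(c_m,c_{-m})$ solves $\widetilde h_m w_m=\pm\im\,w_m$ in the sense of distributions. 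Thus $w_m\in\mD(\widetilde h_m^{\,*})=\mD(\overline{\widetilde h_m})$ and $\overline{\widetilde h_m}\,w_m=\pm\im\,w_m$; since $\overline{\widetilde h_m}$ is selfadjoint, $\pm\im$ lies in its resolvent set, so $w_m=0$ for every $m\ge1$, whence $u=0$. The one step demanding care is exactly this last one: a generic element of $\Cinfty(\Omega_2)^2$ has an infinite angular expansion, so $T_0$ is \emph{not} literally the orthogonal sum of the operators $\widetilde h_m$ over the algebraic direct sum of copies of $\Cinfty(\R)^2$, and one cannot merely invoke ``a direct sum of essentially selfadjoint operators is essentially selfadjoint''. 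Passing to the deficiency spaces of $T_0$ and projecting onto the invariant subspaces $\Ltwo(\R)\otimes\mathfrak K_m$ circumvents this, and it is the essential selfadjointness of $\mA_\kappa$ that legitimises the projection step, which involves the non-compactly supported eigenfunctions $g_m^\kappa$.
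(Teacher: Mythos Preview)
Your proof is correct and follows essentially the same route as the paper: conjugate by $W$ via Lemma~\ref{lemma:Asummary}\,\ref{item:Asummary:iv}, expand in the eigenbasis $(g_m^\kappa)_m$ of $\mA_\kappa$, use the symmetry $\sigma_3 g_m^\kappa=-g_{-m}^\kappa$ to reduce to the two-dimensional first-order systems on $\R$ with bounded potential, and conclude from the essential selfadjointness of those. The only difference is that you are more explicit than the paper about why one may legitimately project an element of $\ker(T_0^*\mp\im)$ onto the angular eigenspaces---approximating $g_{\pm m}^\kappa$ in the graph norm of $\mA_\kappa$ by $\Cinfty$ functions---whereas the paper simply expands and pairs with $g_\mu$ directly; your extra care here is justified and does not change the argument.
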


\begin{proof} 
   Obviously, $\Fh_1^{(\kappa)}$ is hermitian and $h_{1,0}^{(\kappa)}$ is symmetric.
   Hence, the lemma is proved if we have shown that $\ker(h_{1,0}^{(\kappa)*} \pm \im) = \{0\}$. 
   The equality $h_{1,0}^{(\kappa)*} \psi = \pm \im\, \psi$ is equivalent to 
   \begin{align*}
      \Bigl[ W^{-1} h_{1,0}^{(\kappa)*} W \Bigr] W^{-1} \psi
      = \pm \im\, W^{-1} \psi
   \end{align*}
   with the matrix $W$ as defined in Lemma~\ref{lemma:Asummary}.

   If we set $\phi := W^{-1} \psi$, then, by Lemma~\ref{lemma:Asummary}, the equation above is equivalent to
   \begin{align}
      \left[
      \sigma_3 \left(-\im \pdiff{x}\right)
      + \frac{\sqrt{\Delta}}{r^2 + a^2}
      \mA_\kappa
      \right]\phi 
      = \pm \im\, \phi(x,\theta) .
      \label{eq:phi}
   \end{align}

   Since the set of eigenfunctions of the operator $\mA_\kappa$, denoted by $g_n^\kappa$ as in Lemma~\ref{lemma:Asummary}, is  complete in $\Ltwo((0,\pi),\, \rd\theta)^2$, we can expand the function $\phi$ in \eqref{eq:phi} as
   \begin{align*}
      \phi(x, \theta)\, =\, \sum_{m\in\Z\setminus\{0\}} \xi_m(x) g_m^\kappa(\theta)
   \end{align*}
   with functions $\xi_m\in\Ltwo((-\infty,\infty), \rd x)$.
   Thus \eqref{eq:phi} yields
   \begin{align*}
      &\pm \im\, \sum_{m\in\Z\setminus\{0\}} 
      \xi_m(x) g_{m}(\theta)\,\\[1ex]
      &\hspace*{10ex}
      \begin{aligned}
         &=\,
         \sum_{m\in\Z\setminus\{0\}} 
	 -\im \pdiff{x}\, \sigma_3\, \xi_m(x)\, g_m^\kappa(\theta)
	 +
         \sum_{m\in\Z\setminus\{0\}} 
         \frac{\sqrt{\Delta}}{r^2 + a^2}\,
         \xi_m(x)\,
	 \mA_\kappa\, g_m^\kappa(\theta)
         \\[1ex]
         &=\, 
	 \sum_{m\in\Z\setminus\{0\}} 
         \im \pdiff{x}\, \xi_m(x)\, g_{-m}(\theta)
         + \frac{\lambda_m^\kappa \sqrt{\Delta}}{r^2 + a^2}\,
         \xi_m(x)\, g_m^\kappa(\theta) \,.
      \end{aligned}
   \end{align*}
   Taking the scalar product with $g_{\mu}(\theta)$ in $\C^2$ and integrating with respect to $\theta$ yields on the left hand side
   \begin{align*}
      \pm \im\, 
      \int_0^\pi \sum_{m\in\Z\setminus\{0\}} \xi_m 
      \big\langle g_m^\kappa(\theta)\, g_\mu(\theta) \big\rangle_{\C^2} \,\rd\theta
      \, &=\, 
      \pm \im\, 
      \sum_{m\in\Z\setminus\{0\}} \xi_m 
      \int_0^\pi \big\langle g_m^\kappa(\theta)\, g_\mu(\theta) \big\rangle_{\C^2} \,\rd\theta \\
      &\, =\, \pm \im\, \xi_\mu
   \end{align*}
   and on the right hand side
   \begin{multline*}
      \int_0^\pi
      \bigg\langle
      \sum_{m\in\Z\setminus\{0\}} 
      \im \pdiff{x} \xi_m(x) g_{-m}(\theta)
      + \frac{\lambda_m^\kappa \sqrt{\Delta}}{r^2 + a^2}\,
      \xi_m(x) g_m^\kappa(\theta),\
      g_\mu(\theta) 
      \bigg\rangle_{\C^2}\, \rd\theta
      \\[1ex]
      \begin{aligned}
      &=\,
      \sum_{m\in\Z\setminus\{0\}} 
      \im \pdiff{x} \xi_m(x) 
      \int_0^\pi
      \langle g_{-m}(\theta),\, g_\mu(\theta) \rangle_{\C^2}\, \rd\theta
      \\ 
      &\hspace{3ex}
      + \sum_{m\in\Z\setminus\{0\}} 
      \frac{\lambda_m^\kappa \sqrt{\Delta}}{r^2 + a^2}\,
      \xi_m(x) 
      \int_0^\pi
      \langle g_{m}(\theta),\, g_\mu(\theta) \rangle_{\C^2}\, \rd\theta 
      \int_0^\pi
      \langle g_{m}(\theta),\, g_\mu(\theta) \rangle_{\C^2}\, \rd\theta 
      \\[1ex]
      &=\,
      \im \pdiff{x} \xi_{-m}(x) 
      + 
      \frac{\lambda_m^\kappa \sqrt{\Delta}}{r^2 + a^2}\,
      \xi_m(x) 
      \,.
      \end{aligned}
   \end{multline*}
   Hence we obtain
   \begin{align}
      \label{eq:xim}
      \pm \im\, \xi_{\mu}(x) 
      \, =\,
      \im\, \pdiff{x} \xi_{-\mu}(x) 
      + 
      \frac{\lambda_\mu^\kappa \sqrt{\Delta}}{r^2 + a^2}\,
      \xi_{\mu}(x) \,.
   \end{align}
   An analogous procedure with $g_{-\mu}$ gives
   \begin{align}
      \label{eq:xi-m}
      \pm \im\, \xi_{-\mu}(x) 
      \, =\,
      \im\, \pdiff{x} \xi_{\mu}(x) 
      + 
      \frac{\lambda_\mu^\kappa \sqrt{\Delta}}{r^2 + a^2}\,
      \xi_{-\mu}(x) \,.
   \end{align}
   Combining equations \eqref{eq:xim} and \eqref{eq:xi-m} we obtain
   \begin{align}
      \newcommand{\neentry}{\frac{\lambda_\mu^\kappa \sqrt{\Delta}}{r^2+a^2}}
      \label{eq:xdiff}
      \pm\im\, 
      \begin{pmatrix} 
	 \xi_\mu \vphantom{\neentry} \\ \xi_{-\mu}\vphantom{\neentry}
      \end{pmatrix} 
      \, =\, 
      \begin{pmatrix} 
	 \neentry & \im\, \diff{x} \\
	 \im\, \diff{x}  & -\neentry
      \end{pmatrix} 
      \begin{pmatrix} 
	 \xi_\mu \vphantom{\neentry} \\ \xi_{-\mu}\vphantom{\neentry}
      \end{pmatrix} \, .
   \end{align}
   The multiplication operator $\frac{\lambda_\mu^\kappa \sqrt{\Delta}}{r^2+a^2}$ is bounded on  $\Ltwo((-\infty, \infty), \rd x)$ and it is well known that the operator 
   $\Bigl(\begin{smallmatrix} 0 & \im\, \diff{x}\\ \im\, \diff{x} & 0 \end{smallmatrix}\Bigr)$ 
   with domain $\Cinfty(-\infty,\infty)^2$ is essentially selfadjoint in $\Ltwo((-\infty, \infty), \rd x)^2$, see, for instance, \cite[Ch.6.B)]{We87} .
%    (note that
%    $\Bigl(
%    \begin{smallmatrix} 0 & \im\, \diff{x}\\ \im\, \diff{x} & 0 \end{smallmatrix}
%    \Bigr) = 
%    W^{-1} \Bigl(
%    \begin{smallmatrix} 0 & \diff{x}\\ -\diff{x} & 0 \end{smallmatrix}
%    \Bigr) W$ 
%    with the unitary matrix 
%    $W = \frac{1}{2} \bigl(
%    \begin{smallmatrix} -1+\im & -1-\im\\ 1-\im & -1-\im \end{smallmatrix}
%    \bigr)$).
    Hence the same is true for 
    $\Bigl(\begin{smallmatrix}  \frac{\lambda_m^\kappa \sqrt{\Delta}}{r^2+a^2}& \im\, \diff{x}\\ \im\, \diff{x} & 
   -\frac{\lambda_\mu^\kappa    \sqrt{\Delta}}{r^2+a^2} \end{smallmatrix}\Bigr)$ and equation \eqref{eq:xdiff} has only the trivial  
    solution in $\Ltwo((-\infty, \infty), \rd x)$.
    This, however, implies that also equation \eqref{eq:phi} has only the trivial solution and the lemma is proved.
\end{proof}
\medskip

Let $H^{(\kappa)}$ be the unique selfadjoint extension of $\FH^{(\kappa)}$ (see \eqref{eq:Cauchykappa}).
The following result is proved in \cite{Schmid}.
\begin{proposition}\label{prop}
For each half-integer $\kappa$, the spectrum $\sigma(H^{(\kappa)})$ coincides with $\R$, and $H^{(\kappa)}$ has no eigenvalues.    
\end{proposition}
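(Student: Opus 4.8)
The plan is to diagonalize the angular part, thereby reducing $\FH^{(\kappa)}$ to a family of one-dimensional Dirac systems on the line, and then to read off both the spectrum and the absence of eigenvalues from the behaviour of the coefficients at the two ends $x\to\pm\infty$, which correspond to spatial infinity and to the event horizon. Conjugating each $\C^2$-block of $\FH^{(\kappa)}$ by the matrix $W$ of Lemma~\ref{lemma:Asummary} and using parts \ref{item:Asummary:iii} and \ref{item:Asummary:iv} together with the completeness of the eigenfunctions $g_m^\kappa$ of $\mA_\kappa$, one expands a $\C^4$-valued function in the $g_m^\kappa$. Since by part~\ref{item:Asummary:ii} the matrix $\sigma_3$ maps $g_m^\kappa$ to $-g_{-m}^\kappa$, the four coefficients belonging to the pair $\{\pm m\}$ in the two blocks stay coupled, and one obtains a unitary equivalence $H^{(\kappa)}\cong\bigoplus_{m\ge1}D_m$, where
\[
   D_m\ =\ -\im\,\mathcal B\,\diff{x}\ +\ V_m(x)
   \qquad\text{on }\ \Ltwo\bigl((-\infty,\infty),\rd x\bigr)^4,
\]
$\mathcal B$ is a constant Hermitian involution with eigenvalues $\pm1$, each of multiplicity $2$, and $V_m$ is a smooth bounded Hermitian matrix-valued potential. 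For the block originating from $\Fh_1^{(\kappa)}$ this is exactly the structure already visible in \eqref{eq:xdiff}; the remaining entries of $V_m$ come from the bounded operator $\FH_2^{(\kappa)}$.

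One then determines the behaviour of $V_m$ at the two ends. As $x\to+\infty$ one has $r\to\infty$, $\diff[x]{r}=\frac{r^2+a^2}{\Delta}\to1$, $\frac{\sqrt\Delta}{r^2+a^2}=O(1/r)$ and $\frac{mr\sqrt\Delta}{r^2+a^2}=m\bigl(1-M/r+O(r^{-2})\bigr)$, so $V_m(x)\to V_{+\infty}$, a constant matrix carrying the rest mass $m$, with a Coulomb-type tail $V_m(x)-V_{+\infty}=O(1/x)$. As $x\to-\infty$ one has $r\to r_+$, and in the non-extreme case $r_+$ is a \emph{simple} zero of $\Delta$; hence $\Delta(r)\sim\Delta'(r_+)(r-r_+)$ and, by \eqref{eq:rTrafo}, $r-r_+\sim\mathrm{const}\cdot\e^{\gamma x}$ for some $\gamma>0$. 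Consequently every entry of $V_m$ carrying a factor $\sqrt\Delta$ — in particular all mass terms — decays exponentially, while the contribution $-\frac{a\kappa+\e Q r}{r^2+a^2}I_4$ converges exponentially fast to the \emph{scalar} matrix $V_{-\infty}:=-\Omega_+I_4$, where $\Omega_+:=\frac{a\kappa+\e Q r_+}{r_+^2+a^2}$. Thus $V_m(x)-V_{-\infty}$ decays exponentially as $x\to-\infty$. (The same holds when $\e Qr/(r^2+a^2)$ is replaced by a potential $q$ with $q(r)/r$ convergent.)

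The equality $\sigma(H^{(\kappa)})=\R$ follows at once. The horizon-end model operator $D_{-\infty}:=-\im\mathcal B\,\diff{x}-\Omega_+I_4$ on $\Ltwo(\R)^4$ has spectrum $\sigma(-\im\mathcal B\,\diff{x})-\Omega_+=\R$, since $V_{-\infty}$ is a multiple of the identity and opens no mass gap. Cutting off plane waves $\e^{\im\xi x}v$ (with $v$ an eigenvector of $\mathcal B$ and $\xi$ chosen appropriately) by bumps supported around $x=-n$, $n\to\infty$, where $V_m(x)\approx V_{-\infty}$, produces for every real number a Weyl sequence for $D_m$, whence $\R\subseteq\sigma_{\mathrm{ess}}(D_m)\subseteq\sigma(H^{(\kappa)})$; the reverse inclusion is trivial because $H^{(\kappa)}$ is selfadjoint. (Alternatively, one may invoke \cite{BeMa99}.)

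It remains to exclude eigenvalues. Suppose $\omega\in\R$ were an eigenvalue; by the reduction some nonzero $\Phi\in\Ltwo\bigl((-\infty,\infty)\bigr)^4$ solves $D_m\Phi=\omega\Phi$, i.e.\ $\Phi'=A_m(x,\omega)\Phi$ with $A_m(x,\omega)=\im\mathcal B\bigl(\omega I-V_m(x)\bigr)$. By the previous step $A_m(x,\omega)\to A_\infty:=\im(\omega+\Omega_+)\mathcal B$ as $x\to-\infty$, and $A_m(\cdot,\omega)-A_\infty$ is exponentially small, hence integrable on $(-\infty,0)$. The eigenvalues of $A_\infty$ are $\pm\im(\omega+\Omega_+)$ — all purely imaginary (and $A_\infty=0$ if $\omega=-\Omega_+$). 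By the asymptotic integration (Levinson) theorem, see e.g.\ \cite{We87}, there is a fundamental system of solutions $\Phi_j(x)=\bigl(v_j+o(1)\bigr)\e^{\mu_jx}$, $\mu_j\in\{\pm\im(\omega+\Omega_+)\}$, $v_j\ne0$ (respectively $\Phi_j(x)\to v_j\ne0$ when $\omega=-\Omega_+$), as $x\to-\infty$. Hence every nonzero solution $\Phi$ satisfies $\liminf_{X\to\infty}\frac1X\int_{-X}^0\|\Phi(x)\|_{\C^4}^2\,\rd x>0$, so $\int_{-\infty}^0\|\Phi(x)\|_{\C^4}^2\,\rd x=\infty$; this contradicts $\Phi\in\Ltwo\bigl((-\infty,\infty)\bigr)^4$. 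Therefore $H^{(\kappa)}$ has no eigenvalues. The main difficulty is exactly this last step: making the asymptotic integration at the horizon rigorous and confirming the exponential decay of $V_m(x)-V_{-\infty}$ there. This is where the non-extreme hypothesis enters essentially — if $r_+$ is a double zero of $\Delta$ the decay degrades to $O(1/|x|)$, the horizon perturbation becomes long range, and embedded eigenvalues can occur (cf.\ \cite{Schmid}, \cite{WY06}); a further, purely technical, point is the matrix bookkeeping in the reduction step.
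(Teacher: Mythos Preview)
The paper gives no proof of this proposition; it simply records the result and cites \cite{Schmid} (and, in the introduction and in the proof of Theorem~\ref{thm:decay}, also \cite{WY06} and \cite{BeMa99}). In those references the argument goes through Chandrasekhar's full separation: for each \emph{fixed} $\omega\in\R$ the eigenvalue equation decouples into a radial $2\times2$ Dirac system and an $\omega$-dependent angular problem, and one then analyses the radial system on the line.

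Your reduction step has a genuine gap. You claim a unitary equivalence $H^{(\kappa)}\cong\bigoplus_{m\ge1}D_m$ obtained by expanding in the eigenfunctions $g_m^\kappa$ of $\mA_\kappa$. This is correct for $\FH_1^{(\kappa)}$, exactly as in the proof of Proposition~\ref{prop:selfadjoint} and equation~\eqref{eq:xdiff}, and for all but one piece of $\FH_2^{(\kappa)}$. The exception is the term
\[
   \frac{am\sqrt{\Delta}\cos\theta}{r^2+a^2}
   \begin{pmatrix} & \im\, I_2 \\ -\im\, I_2 & \end{pmatrix},
\]
which acts by multiplication with $\cos\theta$ in the angular variable. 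Multiplication by $\cos\theta$ does \emph{not} commute with $\mA_\kappa$ and is not block-diagonal in the basis $(g_m^\kappa)_m$; it couples neighbouring angular modes. Hence, for $a\neq0$, the pairs $\{m,-m\}$ do not decouple and no direct sum of $4\times4$ one-dimensional Dirac systems arises. The operator $\mA_\kappa$ is, as the paper remarks right after its definition in Lemma~\ref{lemma:Asummary}, the angular part only in the special case $a=0$. This is precisely why Chandrasekhar's separation has to be $\omega$-dependent: only after fixing $\omega$ can the $am\cos\theta$ term (together with $a\omega\sin\theta$) be absorbed into a selfadjoint, compactly resolvent angular operator that diagonalises the full problem.

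Your horizon analysis is, in spirit, the correct mechanism and matches what \cite{Schmid} and \cite{WY06} do \emph{after} the proper separation: in the non-extreme case the radial potential tends exponentially fast to a scalar as $x\to-\infty$, the asymptotic coefficient matrix has purely imaginary eigenvalues, and Levinson-type asymptotics rule out $L^2$-solutions near the horizon. But applied to your incorrectly decoupled $D_m$ the argument does not speak about $H^{(\kappa)}$; and without the decoupling you face an infinite coupled system on which the finite-dimensional Levinson theorem is not available. (Treating the $\cos\theta$ term as a relatively compact perturbation recovers $\sigma_{\mathrm{ess}}(H^{(\kappa)})=\R$, but says nothing about the absence of eigenvalues.)
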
 

%%%}}}

\section{Rellich property} %%%{{{

\begin{definition} 
   Let $\Omega$ be an open set in $\R^n$ and $T$ a selfadjoint operator on $\Ltwo(\Omega)$. 
   We say that $T$ has the {\em Rellich property}, if the following is true:
   Let $F\subset\mD(T)$ be such that there exists a positive constant $K$ such that 
   \begin{align*}
	 \|f\|^2 + \|Tf\|^2 \leq K, \quad f \in F.
   \end{align*}

   Then $F$ is precompact in $L^2_{loc}(\Omega')$ for every bounded open set $\Omega' \subset \Omega$, that is, every sequence $(f_n)_n \subset F$ has a subsequence $(f_{n_k})_k$ which converges strongly in $\Ltwo(\Omega')$.
   (As usual, we identify functions $f\in F$ with the corresponding functions $f|_{\Omega'}\in\Ltwo(\Omega')$.)
\end{definition}

The classical result that differential operators in $\Ltwo(\R^d)$ have the Rellich property is cited in the appendix.

The aim of this section is to show that $\FH^{(\kappa)}$ has the Rellich property.
First we proof the following technical lemma needed for Theorem~\ref{thm:rellichproperty}.

\begin{lemma}\label{lemma:technical}
   Let $W$ and $\mA_\kappa$, $\kappa\in\Z+1/2$ be defined as in Lemma~\textup{\ref{lemma:Asummary}} and let $H^{(\kappa)}$ be the Dirac operator in the $\kappa$-th partial wave space \textup{(}see Theorem~\textup{\ref{thm:Cauchy})}.

   Then there exists a positive constant $K$ such that 
   \begin{multline}\label{eq:technical}
      \int_\R \int_0^\pi \left[ \sum\limits_{j=1}^2 
      \left\| \pdiff{x} \Psi_j (x,\theta) \right\|_{\C^2}^2
      + \frac{\Delta}{(r^2+a^2)^2}
      \left\| \mA_{\kappa} \Psi_j(x,\theta) \right\|_{\C^2}^2
      \right] \, \rd\theta\, \rd x
      \\
      \ \le\ 
      K \bigl( 
      \| \mS^{-1}H^{(\kappa)} \Psi\|_{\mS}^2 + \| \Psi\|_{\mS}^2 \bigr)
   \end{multline}
   for all $\Psi= {}^t(\Psi_1,\Psi_2) \in\mD(\mS^{-1}H^{(\kappa)})$ where $\Psi_j\in \Ltwo_{\mS}(\Omega_2)^2$, $j=1,\,2$.
\end{lemma}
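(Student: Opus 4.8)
The plan is to reduce the estimate to a coercivity statement for the diagonal part of the Dirac operator and then to absorb the remaining terms as bounded perturbations. First I would decompose $\FH^{(\kappa)} = \FH_1^{(\kappa)} + \FH_2^{(\kappa)}$ and recall from the proof of Theorem~\ref{thm:Cauchy} that $\FH_2^{(\kappa)}$ is a bounded selfadjoint operator on $\Ltwo_\mS(\Omega_2)^4$ and that $\mS$, $\mS^{-1}$ are bounded with bounds independent of $(x,\theta)$ (Remark~\ref{remark:EquivNorm}). Hence it suffices to prove the analogous estimate with $\mS^{-1}\FH^{(\kappa)}$ replaced by $\FH_1^{(\kappa)}$ and the $\|\cdot\|_\mS$-norm replaced by the ordinary $\Ltwo(\Omega_2)^4$-norm, at least on the core $\Cinfty(\Omega_2)^4$; the general case $\Psi\in\mD(\mS^{-1}H^{(\kappa)})$ then follows by taking closures, since both sides of \eqref{eq:technical} are continuous in the graph norm of $\mS^{-1}H^{(\kappa)}$ (the left-hand side being, up to the perturbation, $\|\FH_1^{(\kappa)}\Psi\|^2$ plus lower-order terms).

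Next I would exploit the block-diagonal structure. Both $\FH_1^{(\kappa)}$ and $\mS^{-1}$ are block-diagonal of the form $\mathrm{diag}(\Fh_1^{(\kappa)}, -\Fh_1^{(\kappa)})$ and $\mathrm{diag}(M, M')$ respectively, so writing $\Psi = {}^t(\Psi_1,\Psi_2)$ splits everything into two copies of a $2$-component problem. For each component I would apply the unitary conjugation by $W$ from Lemma~\ref{lemma:Asummary}\ref{item:Asummary:iv}, which turns $\Fh_1^{(\kappa)}$ into $\sigma_3\bigl(\im\,\pdiff{x}\bigr) - \frac{\sqrt\Delta}{r^2+a^2}\,\mA_\kappa$, and then expand in the eigenbasis $(g_m^\kappa)$ of $\mA_\kappa$ exactly as in the proof of Proposition~\ref{prop:selfadjoint}. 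Since $W$ is unitary it preserves all the $\C^2$-norms appearing in \eqref{eq:technical} (noting $\|\mA_\kappa W^{-1}\phi\| = \|W^{-1}\mA_\kappa'\phi\|$-type identities from part \ref{item:Asummary:iii}–\ref{item:Asummary:iv}), so it is harmless. The point of this step is that $-\im\,\pdiff{x}$ and $\mA_\kappa$ now act on orthogonal factors, so
\begin{align*}
   \left\| \Bigl[\sigma_3\Bigl(\im\,\pdiff{x}\Bigr) - \tfrac{\sqrt\Delta}{r^2+a^2}\mA_\kappa\Bigr]\phi\right\|^2
   \ =\ \left\|\pdiff{x}\phi\right\|^2 + \tfrac{\Delta}{(r^2+a^2)^2}\|\mA_\kappa\phi\|^2 - 2\,\mathrm{Re}\,\Bigl(\sigma_3\,\im\,\pdiff{x}\phi,\ \tfrac{\sqrt\Delta}{r^2+a^2}\mA_\kappa\phi\Bigr),
\end{align*}
and the cross term must be controlled.

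The main obstacle, therefore, is the cross term: unlike in the flat case, $\frac{\sqrt\Delta}{r^2+a^2}$ depends on $x$, so $\pdiff{x}$ does not commute with the coefficient and one cannot simply integrate by parts in $x$ to kill it. My plan is to bound $\bigl|\tfrac{d}{dx}\bigl(\tfrac{\sqrt\Delta}{r^2+a^2}\bigr)\bigr|$ uniformly in $x$ — using $\tfrac{dx}{dr} = \tfrac{r^2+a^2}{\Delta}$, one computes that this derivative is $O\!\left(\tfrac{\sqrt\Delta}{r^2+a^2}\right)$ and stays bounded on all of $\R$ (it vanishes like $\sqrt\Delta\to 0$ at $x\to-\infty$ and decays at $x\to+\infty$) — integrate by parts in $x$ once to move one $\pdiff{x}$ off, and then estimate the resulting term by Cauchy–Schwarz and Young's inequality $ab\le\varepsilon a^2 + \tfrac{1}{4\varepsilon}b^2$, choosing $\varepsilon$ small enough to absorb the $\|\pdiff{x}\phi\|^2$ and $\tfrac{\Delta}{(r^2+a^2)^2}\|\mA_\kappa\phi\|^2$ contributions into the left-hand side, at the cost of an additional multiple of $\|\phi\|^2$ on the right. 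Summing the two block contributions, undoing the $W$-conjugation, and reinstating the bounded perturbation $\FH_2^{(\kappa)}$ and the equivalence of $\|\cdot\|$ and $\|\cdot\|_\mS$ then yields \eqref{eq:technical} with a suitable constant $K$.
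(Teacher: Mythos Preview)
Your plan is correct and matches the paper's proof essentially step for step: reduce to $\Cinfty$, strip off the bounded perturbation $\FH_2^{(\kappa)}$ and the $\mS$-weight, use the $W$-conjugation from Lemma~\ref{lemma:Asummary}\ref{item:Asummary:iv} to write $\Fh_1^{(\kappa)}$ as $\sigma_3(\im\partial_x) - \tfrac{\sqrt\Delta}{r^2+a^2}\mA_\kappa$, expand the square, integrate the cross term by parts in $x$, bound $\bigl|\partial_x\tfrac{\sqrt\Delta}{r^2+a^2}\bigr|\le C\,\tfrac{\sqrt\Delta}{r^2+a^2}$, and finish with Cauchy--Schwarz/Young. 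Two minor remarks: the eigenfunction expansion in $(g_m^\kappa)$ is not actually needed here (the paper works directly with the operator $\mA_\kappa$), and after the integration by parts the cross term no longer contains $\partial_x\Psi_j$, so only the $\tfrac{\Delta}{(r^2+a^2)^2}\|\mA_\kappa\Psi_j\|^2$ contribution must be absorbed, with the remaining $\|\Psi_j\|^2$ going to the right-hand side.
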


\begin{proof} 
   First assume $\Psi \in \Cinfty(\Omega_2)^4$. 
   Observe that
   \begin{align*}
      \frac{1}{2} 
      \leq 
      \biggl\|\biggl( 
      \id +\frac{a\sin{\theta}\, \sqrt{\Delta}}{r^2+a^2}\,\beta\,\Sigma_2
      \biggr)^{-1}\biggr\| \leq 2,
      \qquad (r,\theta)\in(-\infty,\infty)\times(0,\pi).
   \end{align*}
   With the notation of Section~\ref{sec:trafo} we write $\mS^{-1}H^{(\kappa)}$ as the sum of an unbounded operator $\mS^{-1}H_1^{(\kappa)}$ and a bounded $\mS^{-1}H_2^{(\kappa)}$ and obtain
   \begin{align}
      \| \mS^{-1} H^{(\kappa)} \Psi\|^2_\mS
      &\,\ge\, 
      \frac{\,1\,}{2}\ \| \mS^{-1} H_1^{(\kappa)} \Psi\|^2_\mS
      - \| \mS^{-1} H_2^{(\kappa)} \Psi\|^2_\mS
      \nonumber
      \\
      &\,=\, \frac{\,1\,}{2} \int_{\R} \int_0^\pi
      \|\mS^{-1/2} H_1^{(\kappa)}\Psi(x,\theta)\|_{\C^4}^2 dx\,d\theta
      - \| \mS^{-1} H_2^{(\kappa)} \Psi\|^2_\mS
      \nonumber
      \\
      &\,\geq\,
      \frac{1}{4} \int_{\R} \int_0^\pi
      \|H_1^{(\kappa)}\Psi(x,\theta)\|_{\C^4}^2 dx\,d\theta 
      - \| \mS^{-1} H_2^{(\kappa)}\|^2\, \|\Psi\|^2_\mS.
      \label{eq:technical:1}
   \end{align}  

   Recall that 
   \begin{align*}
      H_1^{(\kappa)}
      &=  
      \begin{pmatrix} \sigma_3 &  \\  & - \sigma_3 \end{pmatrix} 
      \left(-\im \pdiff{x} \right) 
      +
      \frac{\sqrt{\Delta}}{r^2+a^2} 
      \begin{pmatrix} W^{-1} \mA_\kappa W &  \\ &  -W^{-1} \mA_\kappa W \end{pmatrix}
      \\[2ex]
      &= 
      \begin{pmatrix} \sigma_3 &  \\  & - \sigma_3 \end{pmatrix} 
      \left[
      \left(-\im \pdiff{x} \right) 
      +
      \frac{\sqrt{\Delta}}{r^2+a^2} 
      \begin{pmatrix} \im\, \mA_\kappa  &  \\ &  -\im\ \mA_\kappa \end{pmatrix}
      \right].
   \end{align*}
   by Lemma~\ref{lemma:Asummary}.
   Since the first matrix in the above representation is unitary, we obtain
   
   \begin{multline}
      \int_{\R} \int_0^\pi \|H_1^{(\kappa)} \Psi(x,\theta)\|_{\C^2}^2 \rd x\, \rd\theta\
      \\
      \begin{aligned}[b]
	 &=\ \int_{\R} \int_0^\pi \left\| 
	 -\im \pdiff{x} \Psi(x,\theta) + 
	 \frac{\sqrt{\Delta}}{r^2+a^2} 
	 \begin{pmatrix} \im\, \mA_{\kappa} & \\ & \im\,\mA_{\kappa}
	 \end{pmatrix}
	 \Psi(x,\theta) \right\|_{\C^4}^2 
	 \rd\theta\, \rd x 
	 \\
	 &=\ \sum_{j=1}^2 \int_{\R} \int_0^\pi 
	 \left\|
	 -\im \pdiff{x} \Psi_j(x,\theta)
	 +\frac{\sqrt{\Delta}}{r^2+a^2} \im\,\mA_{\kappa} \Psi_j(x,\theta)
	 \right\|_{\C^2}^2 
	 \rd\theta\, \rd x 
	 \\
	 &=\ \sum_{j=1}^2 
	 \begin{aligned}[t]
	    &\int_{\R} \int_0^\pi 
	    \left[\left\|\pdiff{x }\Psi_j(x,\theta)\right\|_{\C^2}^2
	    + \frac{\Delta}{(r^2+a^2)^2}
	    \|\mA_{\kappa} \Psi_j(x,\theta)\|_{\C^2}^2 \right]
	    \rd\theta\, \rd x 
	    \\
	    & + 
	    \int_{\R} \int_0^\pi 
	    \left\langle 
	    \Psi_j(x,\theta),\ 
	    \left(\pdiff{x} \frac{\sqrt{\Delta}}{r^2+a^2} \right)
	    \mA_{\kappa} \Psi_j(x,\theta) \right\rangle_{\C^2} 
	    \rd\theta\, \rd x 
	 \end{aligned}
      \end{aligned}
      \label{eq:technical:2}
   \end{multline} 
   where in the last step we have used that $-\im\, \partial_x$ is symmetric and $\mA_{\kappa}$ commutes with $-\im\, \partial_x$ and performed integration by parts.
   There are no boundary terms since $\Psi$ has compact support.

   There exists a positive constant $C$  such that 
   \begin{align} 
      \left|\diff{x} \frac{\sqrt{\Delta}}{r^2+a^2}\right|
      &= \left|\frac{dr}{dx}\frac{d}{dr}\frac{\sqrt{\Delta}}{r^2+a^2}\right|
      = \frac{\Delta}{r^2+a^2}
      \left|\frac{d}{dr}\frac{\sqrt{\Delta}}{r^2+a^2}\right| 
      \nonumber
      \\
      & =\frac{\sqrt{\Delta}}{r^2+a^2}
      \left|
      - \frac{2r\Delta}{(r^2+a^2)^2} 
      + \frac{r-M}{r^2+a^2}
      \right|
      \leq C\, \frac{\sqrt{\Delta}}{r^2+a^2} 
      \label{eq:technical:3}
   \end{align}
   since the term in $|\cdot|$ is continuous and tends to zero for $x\to\pm\infty$.
   Moreover, for every $\varepsilon > 0$ 

   \begin{multline}
      \left\langle \Psi_j,\ 
      \left(\pdiff{x} \frac{\sqrt{\Delta}}{r^2+a^2} \right)
      \mA_{\kappa} \Psi_j \right\rangle_{\Ltwo(\Omega_2)^2}
      \\
      \begin{aligned}[b]
	 &\ge
	 - \left\| \Psi_j \right\|_{\Ltwo(\Omega_2)^2}\, 
	 \left\| \left(\pdiff{x} \frac{\sqrt{\Delta}}{r^2+a^2} \right)
	 \mA_{\kappa} \Psi_j \right\|_{\Ltwo(\Omega_2)^2}
	 \\
	 &\ge
	 -\frac{\varepsilon^2}{2}
	 \left\| \Psi_j \right\|_{\Ltwo(\Omega_2)^2}^2
	 -
	 \frac{1}{2\varepsilon^{2}} \left\| \left(\pdiff{x} \frac{\sqrt{\Delta}}{r^2+a^2} \right)
	 \mA_{\kappa} \Psi_j \right\|_{\Ltwo(\Omega_2)^2}^2.
      \end{aligned}
      \label{eq:technical:4}
   \end{multline}

   Hence, from \eqref{eq:technical:1}, \eqref{eq:technical:2}, \eqref{eq:technical:3} and \eqref{eq:technical:4} we obtain
   \begin{multline*}
      \|\mS^{-1}H^{(\kappa)}\Psi\|^2
      \\
      \, \ge\,
      \begin{aligned}[t]
	 &\frac{1}{4}
	 \sum_{j=1}^2 
	 \int_{\R} \int_0^\pi \left[
	 \left\|\pdiff{x}\Psi_j(x,\theta)\right\|_{\C^2}^2
	 + \frac{(1-\frac{C}{2\varepsilon^2})\Delta}{(r^2+a^2)^2}\,
	 \|\mA_{\kappa} \Psi_j(x,\theta)\|_{\C^2}^2
	 \right] \rd\theta\, \rd x 
	 \\
	 &-
	 \frac{\varepsilon^2}{4}
	 \left\| \Psi \right\|_{\Ltwo(\Omega_2)^4}^2
	 -
	 \| \mS^{-1} H_2^{(\kappa)}\|^2\, \|\Psi\|^2_\mS.
      \end{aligned}
   \end{multline*}

   If we choose $\varepsilon$ small enough such that $\frac{C}{2\varepsilon^2} < 1$ and observe that 
   $\left\| \Psi \right\|_{\Ltwo(\Omega_2)^4}^2 
   \le \|\mS^{-1/2}\|^2 \left\| \Psi \right\|_{\mS}^2$
   then we can choose an $K>0$ large enough such that the estimate in the assertion holds.

   For any $\Psi \in \mD(H^{(\kappa)})$ there exists a sequence $(\Psi_n)_n$ such that 
   \begin{align*}
      \Psi_n \rightarrow \Psi,
      \quad H^{(\kappa)} \Psi_n \rightarrow H^{(\kappa)} \Psi 
   \end{align*}
   in $\Ltwo_\mS(\Omega_2)$.
   Notice that for this sequence also the left hand side of \eqref{eq:technical} converges by the dominated convergence theorem.
   Therefore, assertion holds for all $\Psi\in\mD(H^{(\kappa)})$.
\end{proof} 

Note that $K$ does not depend on $\Psi$ but only on the $\mathscr L_\infty$-bound $C$ of $\displaystyle \sqrt{\Delta} \diff{r}\frac{\sqrt{\Delta}}{r+a^2}$.

\begin{theorem}[Rellich property]
   \label{thm:rellichproperty}
   Let $\kappa$ be any half integer, $R>0$ and set  
   \begin{equation}
   \Omega_{2,R} := (-R,R)\times (-\pi,\pi),
   \end{equation}
   Let $(\Psi_n)_{n}\subset\mD(H^{(\kappa)})$ such that
   \begin{align*}
      \|\Psi_n\|_\mS + \|H^{(\kappa)}\Psi_n\|_\mS \,\le\, K_0,
      \qquad n\in \N,
   \end{align*}
   for some constant $K_0 > 0$.
   Then there exists a subsequence $(\Psi_{n_\ell})_\ell$ and a $\Phi\in\Ltwo( \Omega_{2,R})^4$ such that
   \begin{align*}
      \Psi_{n_\ell} \rightarrow \Phi
      \quad \text{as}\quad \ell \rightarrow \infty 
   \end{align*}
   in $\Ltwo_\mS(\Omega_{2,R})^4$.
\end{theorem}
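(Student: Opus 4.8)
The strategy is to reduce the two-dimensional compactness statement to the classical Rellich embedding in one variable plus a separate compactness statement in the angular variable, and to use Lemma~\ref{lemma:technical} to control both. Since the $\mS$-norm and the ordinary $\Ltwo$-norm on $\Omega_{2,R}$ are equivalent (Remark~\ref{remark:EquivNorm}), it suffices to extract a subsequence converging in the ordinary $\Ltwo(\Omega_{2,R})^4$-norm. Fix a bounded interval $(-R,R)$ in the $x$-variable. On this interval $\Delta/(r^2+a^2)^2$ is bounded below by a positive constant, say $c_R>0$, because $\Delta(r(x))>0$ for all finite $x$ and the quantity is continuous. Hence the hypothesis $\|\Psi_n\|_\mS+\|H^{(\kappa)}\Psi_n\|_\mS\le K_0$ together with \eqref{eq:technical} gives a uniform bound
\begin{align*}
   \int_{-R}^R\int_0^\pi \left[\sum_{j=1}^2 \left\|\pdiff{x}\Psi_{n,j}(x,\theta)\right\|_{\C^2}^2 + c_R\sum_{j=1}^2\left\|\mA_\kappa\Psi_{n,j}(x,\theta)\right\|_{\C^2}^2\right]\rd\theta\,\rd x \,\le\, K_1
\end{align*}
for all $n$, where $K_1$ depends on $K_0$, $R$ and $\kappa$ but not on $n$.

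Next I would exploit the spectral decomposition of $\mA_\kappa$ from Lemma~\ref{lemma:Asummary}\ref{item:Asummary:i}. Expand each component $\Psi_{n,j}(x,\cdot)=\sum_{m}\xi_{n,j,m}(x)g_m^\kappa$ in the eigenbasis $(g_m^\kappa)_m$; then $\|\mA_\kappa\Psi_{n,j}(x,\cdot)\|_{\C^2}^2=\sum_m|\lambda_m^\kappa|^2|\xi_{n,j,m}(x)|^2$, and the uniform bound above reads
\begin{align*}
   \sum_{j=1}^2\sum_m \int_{-R}^R\left(|\xi_{n,j,m}'(x)|^2 + c_R|\lambda_m^\kappa|^2|\xi_{n,j,m}(x)|^2\right)\rd x \,\le\, K_1,
\end{align*}
while $\|\Psi_n\|_\mS^2\le K_0^2$ gives (up to the equivalence constant) $\sum_{j,m}\int_{-R}^R|\xi_{n,j,m}(x)|^2\rd x\le K_2$. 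Since $|\lambda_m^\kappa|\to\infty$ as $|m|\to\infty$, the tail $\sum_{|m|>N}\int_{-R}^R|\xi_{n,j,m}|^2\rd x\le K_1/(c_R|\lambda_N^\kappa|^2)$ is small uniformly in $n$: this is the uniform "smallness of high angular modes" that compensates for the angular direction being non-compact-spectrally only in a mild way. For each fixed $m$ and $j$, the functions $\xi_{n,j,m}$ are bounded in $H^1((-R,R))$, so by the classical Rellich theorem in one dimension (cited in the appendix) and a diagonal argument over the countably many pairs $(j,m)$ I can pass to a subsequence $(\Psi_{n_\ell})_\ell$ such that $\xi_{n_\ell,j,m}\to\eta_{j,m}$ in $\Ltwo((-R,R))$ for every $j,m$.

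Finally I would assemble the limit. Put $\Phi_j(x,\theta):=\sum_m\eta_{j,m}(x)g_m^\kappa(\theta)$; the bound on the $\eta_{j,m}$ inherited in the limit shows $\Phi={}^t(\Phi_1,\Phi_2)\in\Ltwo(\Omega_{2,R})^4$. Given $\varepsilon>0$, choose $N$ so large that the high-mode tails of both $\Psi_{n_\ell}$ and $\Phi$ are below $\varepsilon$ in $\Ltwo(\Omega_{2,R})$-norm, uniformly in $\ell$; then for $\ell$ large the finitely many low modes satisfy $\sum_{|m|\le N}\sum_j\|\xi_{n_\ell,j,m}-\eta_{j,m}\|_{\Ltwo((-R,R))}^2<\varepsilon$ by the $\Ltwo$-convergence just obtained. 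Combining, $\|\Psi_{n_\ell}-\Phi\|_{\Ltwo(\Omega_{2,R})^4}\to 0$, and by the norm equivalence of Remark~\ref{remark:EquivNorm} also $\|\Psi_{n_\ell}-\Phi\|_\mS\to 0$. The main obstacle is the bookkeeping at the transition to the limit: one must verify that the uniform tail estimate holds not only for the $\Psi_{n_\ell}$ but also for the candidate limit $\Phi$ (this follows by Fatou applied along the convergent subsequence), and that the interchange of the infinite $\theta$-sum with the $x$-integration in \eqref{eq:technical} is legitimate, which is guaranteed because all terms are nonnegative. No PDE regularity beyond what \eqref{eq:technical} already provides is needed, so the argument is essentially a soft functional-analytic patching of one-dimensional Rellich with the discreteness of the spectrum of $\mA_\kappa$.
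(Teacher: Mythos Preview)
Your argument is correct and follows the same overall strategy as the paper: invoke Lemma~\ref{lemma:technical}, use the positive lower bound for $\Delta/(r^2+a^2)^2$ on $(-R,R)$, expand in the eigenbasis $(g_m^\kappa)_m$ of $\mA_\kappa$, and combine a diagonal extraction with the uniform tail estimate coming from $|\lambda_m^\kappa|\to\infty$. The only difference is in how the $x$-direction is handled: the paper performs a second Fourier expansion in the periodic basis $(\e^{-\im\nu\pi x/R})_\nu$ on $(-R,R)$ and argues directly with the coefficient bound $\sum_{\nu,m}[(\nu\pi/R)^2+(\lambda_m^\kappa)^2]\,|\alpha_{\nu,m,n}|^2\le KK_0$, whereas you keep the $x$-variable continuous and appeal to the one-dimensional Rellich embedding $H^1((-R,R))\hookrightarrow\Ltwo((-R,R))$ mode by mode. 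Your route is marginally cleaner in that it sidesteps the issue of relating $\|\partial_x\psi\|^2$ to the periodic Fourier coefficients for functions that are not a priori periodic on $(-R,R)$; otherwise the two arguments are interchangeable.
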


\begin{remark}
   As usual, we identify elements $\Psi\in\Ltwo(\Omega_2)^4$ with elements of $\Psi\in\Ltwo(\Omega_{2,R})^4$ by restriction.
   The assertion of the Theorem implies that there exists a subsequence $(\Psi_{n_\ell})_\ell$ and a $\Phi\in\Ltwo( \Omega_{2} )^4$ such that
%    \begin{align*}
   $
      \chi_R\Psi_{n_\ell} \rightarrow \chi_R\Phi
      \quad \text{as}\quad \ell \rightarrow \infty 
      $
%    \end{align*}
   in $\Ltwo_\mS(\Omega_{2})^4$.
\end{remark}

\begin{proof}[Proof of Theorem~\upshape{\ref{thm:rellichproperty}}]
   Since %by Remark~\ref{remark:EquivNorm} 
   the norms on $\Ltwo_\mS(\Omega_2)^4$ and $\Ltwo(\Omega_2)^4$ are equivalent, it suffices to consider convergence in the latter space.
   Let $\Psi_n =: (\psi^{1}_n,\, \psi^{2}_n)$ with 
   $\psi_n^j\in\Ltwo(\Omega_2)^2$, $j=1,\,2$.
   Lemma \ref{lemma:technical} and the existence of a positive constant $\delta > 0$ such that 
   \begin{align*}
      \frac{\Delta}{(r^2+a^2)^2}\, \geq\, \delta,
      \quad x \in (-R,R)
   \end{align*}
   imply that there is an $K>0$ such that
   \begin{align} 
      \sum\limits_{j=1}^2
      \int_{-R}^R \int_0^\pi 
      \left( 
      \left\|\pdiff{x}\psi^{j}_n(x,\theta)\right\|_{\C^2}^2 
      + \|\mA_\kappa \psi^{j}_n(x,\theta)\|_{\C^2}^2
      \right) dx\,d\theta
      \hspace{8ex}
      \nonumber
      \\
      \leq K ( \|\Psi_n\|_\mS^2 + \|H^{(\kappa)}\Psi_n\|_\mS^2)
      \leq K K_0^2,
      \label{boundedness}
   \end{align}
   for $n\in\N$.
   Since $\Ltwo((0,\pi),\, \rd\theta)^2$ has a basis of orthonormal eigenfunctions $(g_m^\kappa)_m$ of $\mA_\kappa$ with corresponding eigenvalues
   $\lambda_{m}^\kappa$, $m\in\Z\setminus\{0\}$, see Lemma~\ref{lemma:Asummary}.
   Hence $\psi^1_n$ can be expanded in the double Fourier series
   \begin{align*}
      u_n &= \sum_{\nu,m} 
      \alpha_{\nu,m,n} \e^{-i\nu \pi x/R} g_m^\kappa (\theta),
      \\
      \alpha_{\nu,m,n} &= \frac{1}{2R}
      \int_{-R}^R \int_0^\pi 
      \langle \psi^1_n(x,\theta),\,
      \e^{-\im\nu\pi x/R} g_m^\kappa (\theta) \rangle_{\C^2}\,dx\,d\theta 
      \\  
      \sum_{\nu,m} |\alpha_{\nu,m,n}|^2 
      &= \frac{1}{2R} \int_{-R}^R \int_0^\pi |\psi^1_n|^2 dx\,d\theta 
      \leq \frac{1}{2R} \|\Psi_0\|^2_\mS.  
   \end{align*}
   Moreover, inequality \eqref{boundedness} yields       
   \begin{align*}
      \sum_{\nu,m} 
      \left[(\nu\pi/R)^2 + (\lambda_m^\kappa)^2 \right] |\alpha_{\nu,m,n}|^2
      \leq K K_0.
   \end{align*}
   For fixed $(\nu,m)$ the sequence $(\alpha_{\nu,m,n})_n$ is a bounded sequence, hence it contains a subsequence such that $(\alpha_{\nu,m,n_\ell})_{n_\ell}$ is a Cauchy sequence for any $(\nu,m)$ by a diagonal series argument. 
   Thus for arbitray $L>0$ it follows that
   \begin{multline*}
      \frac{1}{2R}\int_{-R}^R \int_0^\pi
      |\psi^1_{n_\ell} - \psi^1_{n_{j}}|^2 \,dx\,d\theta
      \\
      \begin{aligned}
	 =\ &\frac{1}{2R}\int_{-R}^R \int_0^\pi 
	 |\psi^1_{n_\ell} - \psi^1_{n_{j}}|^2 \,dx\,d\theta
	 \,=\, \sum_{\nu,m} |\alpha_{\nu,m,n_\ell}-\alpha_{\nu,m,n_j}|^2 
	 \\
	 =\ &\sum_{(\nu\pi/R)^2+(\lambda_m^\kappa)^2 \leq L} |\alpha_{\nu,m,n_\ell}-\alpha_{\nu,m,n_j}|^2 
	 \\
	 &+ \sum_{(\nu\pi/R)^2+(\lambda_m^\kappa)^2 \geq L}
	 \frac{
	 \left[(\nu\pi/R)^2+(\lambda_m^\kappa)^2\right]|\alpha_{\nu,m,n_\ell}-\alpha_{\nu,m,n_j}|^2
	 }{(\nu\pi/R)^2+(\lambda_m^\kappa)^2}
	 \\
	 \leq & \sum_{(\nu\pi/R)^2+(\lambda_m^\kappa)^2 \leq L} |\alpha_{\nu,m,n_\ell}-\alpha_{\nu,m,n_j}|^2
	 + \frac{2KK_0}{1+L},
      \end{aligned}
   \end{multline*}         
   which shows that $(\psi_{n_\ell})_\ell$ is a Cauchy sequence in $\Ltwo(\Omega_2)^2$. 

   With the same argument we find that also the sequence $(\psi^2_{n_\ell})_\ell$, contains a convergent subsequence in $\Ltwo(\Omega_2)^2$.        
\end{proof}

%%%}}}

\section{Weak local energy decay} %%%{{{

Since $H^{(\kappa)}$, $\kappa\in\Z\setminus\{0\}$ is selfadjoint in $\Ltwo_\mS(\Omega_2)^4$, it is generator of a unitary group
\begin{align}
   \label{eq:U}
   U^{(\kappa)}(t) := \exp(-\im t \mS^{-1}H^{(\kappa)}).
\end{align}

\begin{theorem}
   \label{thm:exp}
   Let $\kappa\in\Z\setminus\{0\}$, $R>0$ and $\Psi_0\in\mD(H^{(\kappa)})$.
   For any sequence $(t_n)_n \subset \R$ there exist a subsequence $(t_{n_\ell})_\ell$ and  $\Phi \in \Ltwo( \Omega_2)^4$ such that 
   \begin{align*}
      U^{(\kappa)}(t_{n_\ell})\Psi_0 
      \rightarrow \Phi
      \quad\text{as}\quad \ell \rightarrow \infty 
   \end{align*}
   in $\Ltwo(\Omega_2)^4$.
\end{theorem}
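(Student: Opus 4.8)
The plan is to obtain the statement as a direct consequence of the Rellich property proved in Theorem~\ref{thm:rellichproperty}, applied to the orbit of $\Psi_0$ under the unitary group $U^{(\kappa)}$. First I would set $\Psi_n := U^{(\kappa)}(t_n)\Psi_0$ for $n\in\N$. By Theorem~\ref{thm:Cauchy} the operator $\mS^{-1}H^{(\kappa)}$ is selfadjoint in $\Ltwo_\mS(\Omega_2)^4$, so Stone's theorem shows that $U^{(\kappa)}(t)=\exp(-\im t\,\mS^{-1}H^{(\kappa)})$ (see \eqref{eq:U}) is a strongly continuous unitary group on $\Ltwo_\mS(\Omega_2)^4$; in particular $\|\Psi_n\|_\mS=\|\Psi_0\|_\mS$ for every $n$.

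Next I would check that the graph norms of the $\Psi_n$ are uniformly bounded, so that the hypothesis of Theorem~\ref{thm:rellichproperty} is met. Since $\mS$ is bounded and boundedly invertible (Remark~\ref{remark:EquivNorm}), one has $\mD(\mS^{-1}H^{(\kappa)})=\mD(H^{(\kappa)})$, hence $\Psi_0\in\mD(\mS^{-1}H^{(\kappa)})$. The standard invariance of the domain of the generator under a unitary group then gives $\Psi_n\in\mD(\mS^{-1}H^{(\kappa)})$ together with $\mS^{-1}H^{(\kappa)}\Psi_n=U^{(\kappa)}(t_n)\,\mS^{-1}H^{(\kappa)}\Psi_0$, so that $\|\mS^{-1}H^{(\kappa)}\Psi_n\|_\mS=\|\mS^{-1}H^{(\kappa)}\Psi_0\|_\mS$ for all $n$. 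Writing $H^{(\kappa)}\Psi_n=\mS\bigl(\mS^{-1}H^{(\kappa)}\Psi_n\bigr)$ and using the boundedness of $\mS$ once more, one obtains a constant $K_0>0$, independent of $n$, with
\begin{align*}
   \|\Psi_n\|_\mS+\|H^{(\kappa)}\Psi_n\|_\mS\ \le\ K_0,\qquad n\in\N .
\end{align*}

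Then I would invoke the Rellich property for the given radius $R$: by Theorem~\ref{thm:rellichproperty}, and the remark following it, the sequence $(\Psi_n)_n$ has a subsequence $(\Psi_{n_\ell})_\ell$ that converges in $\Ltwo_\mS(\Omega_{2,R})^4$, hence --- the $\mS$-weighted and the unweighted norm being equivalent by Remark~\ref{remark:EquivNorm} --- in $\Ltwo(\Omega_{2,R})^4$, to some limit $\Phi$. Identifying $\Phi$ with an element of $\Ltwo(\Omega_2)^4$ by the restriction convention of the remark after Theorem~\ref{thm:rellichproperty}, and letting $(t_{n_\ell})_\ell$ denote the corresponding subsequence of times, we arrive at the assertion.

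I do not expect any real obstacle here: the whole analytic substance is already packed into Lemma~\ref{lemma:technical} and Theorem~\ref{thm:rellichproperty}. The two points that merely need to be handled carefully are the commutation of $U^{(\kappa)}(t)$ with its own generator and the repeated switching between the weighted space $\Ltwo_\mS$ and the plain space $\Ltwo$, and between the operators $H^{(\kappa)}$ and $\mS^{-1}H^{(\kappa)}$; both are innocuous because $\mS$ and $\mS^{-1}$ are bounded. If something deserves to be called the main difficulty, it is purely bookkeeping --- keeping track of the localization to $\Omega_{2,R}$ and of which Hilbert space one is working in at each step.
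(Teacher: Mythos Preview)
Your proposal is correct and follows essentially the same route as the paper: set $\Psi_n=U^{(\kappa)}(t_n)\Psi_0$, use unitarity and commutation with the generator to get a uniform bound on $\|\Psi_n\|_\mS+\|H^{(\kappa)}\Psi_n\|_\mS$, and then apply Theorem~\ref{thm:rellichproperty}. You are in fact a bit more careful than the paper in distinguishing between $H^{(\kappa)}$ and $\mS^{-1}H^{(\kappa)}$ and in handling the passage between the weighted and unweighted norms.
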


\begin{proof} 
   Let $\Psi_n := U^{(\kappa)}(t_n )\Psi_0$.
   Then, for all $n\in\N$,
   \begin{align*}
      \|\Psi_n\|_\mS + \| \mS^{-1}H^{(\kappa)} \Psi_n\|_\mS
      &= \|U^{(\kappa)}(t_n)\Psi_0\|_\mS 
      + \| \mS^{-1}H^{(\kappa)}U^{(\kappa)}(t_n) \Psi_0 \|_\mS
      \\
      &= \|U^{(\kappa)}(t_n)\Psi_0\|_\mS 
      + \| U^{(\kappa)}(t_n)\mS^{-1}H^{(\kappa)} \Psi_0 \|_\mS
      \\
      &= \|\Psi_0\|_\mS + \| \mS^{-1}H^{(\kappa)} \Psi_0 \|_\mS\,.
   \end{align*}
   Thus the assertion follows by Theorem~\ref{thm:rellichproperty} with $K_0 = \|\Psi_0\| + \| H^{(\kappa)} \Psi_0 \|$.
\end{proof}

\begin{theorem}%[Local energy decay of the partial waves]
   \label{thm:decay}
   For every $R>0$ and $\Phi\in\Ltwo_\mS(\Omega_2)$  
   the time mean of the localization of $\Phi$ in $(-R,R)\times(0,\pi)$ is to zero: 
   \begin{align*}
      % \liminf_{t\to\infty} 
      % \int_{-R}^R\int_0^\pi \left\| U^{(\kappa)}(t) \Phi(x,\theta) \right\|_{\C^4}^2 \,dx\,d\theta
      \lim_{T\to\infty}
      \frac{1}{2T}\int_{-T}^T \left[\int_{-R}^R\int_0^\pi
      \left\| U^{(\kappa)}(t) \Phi(x,\theta) \right\|_{\C^4}^2 \,dx\,d\theta\right] \,dt
      \, =\, 0.
   \end{align*}
\end{theorem}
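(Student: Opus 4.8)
The plan is to deduce Theorem~\ref{thm:decay} from the RAGE theorem applied to the selfadjoint generator $\mS^{-1}H^{(\kappa)}$ of the unitary group $U^{(\kappa)}$ on the Hilbert space $\Ltwo_\mS(\Omega_2)^4$, with the localising operator being the multiplication operator $\chi_R$ by the characteristic function of $\Omega_{2,R}=(-R,R)\times(0,\pi)$. Since the norms of $\Ltwo(\Omega_2)^4$ and $\Ltwo_\mS(\Omega_2)^4$ are equivalent (Remark~\ref{remark:EquivNorm}), the integral over $\Omega_{2,R}$ in the statement equals, up to a constant, $\|\chi_R\,U^{(\kappa)}(t)\Phi\|_\mS^2$, so it suffices to show that \emph{its} time mean tends to zero for every $\Phi\in\Ltwo_\mS(\Omega_2)^4$. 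The argument rests on two facts: (i) $\chi_R$ is compact relative to $\mS^{-1}H^{(\kappa)}$, that is, $\chi_R(\mS^{-1}H^{(\kappa)}+\im)^{-1}$ is a compact operator on $\Ltwo_\mS(\Omega_2)^4$; (ii) $\mS^{-1}H^{(\kappa)}$ has no eigenvalues, so that its continuous subspace is all of $\Ltwo_\mS(\Omega_2)^4$.

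For (i) I would invoke the Rellich property. Given $g$ with $\|g\|_\mS\le 1$, the element $\Psi:=(\mS^{-1}H^{(\kappa)}+\im)^{-1}g$ lies in $\mD(H^{(\kappa)})=\mD(\mS^{-1}H^{(\kappa)})$, and selfadjointness of $\mS^{-1}H^{(\kappa)}$ yields $\|\Psi\|_\mS\le 1$ and $\|\mS^{-1}H^{(\kappa)}\Psi\|_\mS=\|g-\im(\mS^{-1}H^{(\kappa)}+\im)^{-1}g\|_\mS\le 2$; since $\mS$ is bounded, the family of all such $\Psi$ satisfies the hypothesis $\|\Psi\|_\mS+\|H^{(\kappa)}\Psi\|_\mS\le K_0$ of Theorem~\ref{thm:rellichproperty} for a single constant $K_0$. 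Hence every sequence in that family has a subsequence for which $(\chi_R\Psi_n)_n$ converges in $\Ltwo_\mS(\Omega_2)^4$, i.e.\ $\chi_R(\mS^{-1}H^{(\kappa)}+\im)^{-1}$ maps bounded sets to precompact sets and is therefore compact.

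For (ii) I would note that an eigenfunction $\Psi_0\neq 0$ of $\mS^{-1}H^{(\kappa)}$ with real eigenvalue $\omega$ gives the square-integrable solution $\e^{-\im\omega t}\Psi_0$ of \eqref{eq:Cauchykappa}, that is, an energy eigenvalue in the sense of Section~\ref{sec:trafo}. In the non-extreme Kerr--Newman case no such eigenvalue exists (this is the standing assumption; cf.\ Proposition~\ref{prop} and \cite{Schmid}), so the point subspace of $\mS^{-1}H^{(\kappa)}$ is trivial and its continuous subspace is the whole of $\Ltwo_\mS(\Omega_2)^4$.

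With (i) and (ii) the conclusion follows from the RAGE theorem. First take $\Phi\in\mD(H^{(\kappa)})$; setting $g:=(\mS^{-1}H^{(\kappa)}+\im)\Phi$ and using that $\mS^{-1}H^{(\kappa)}+\im$ commutes with $U^{(\kappa)}(t)$ gives
\begin{align*}
   \chi_R\,U^{(\kappa)}(t)\Phi \,=\, \bigl[\chi_R(\mS^{-1}H^{(\kappa)}+\im)^{-1}\bigr]\,U^{(\kappa)}(t)\,g,
\end{align*}
with a compact operator in front of $U^{(\kappa)}(t)g$; since $g$ lies in the continuous subspace by (ii), the RAGE theorem gives that the time mean of $\|\chi_R\,U^{(\kappa)}(t)\Phi\|_\mS^2$ vanishes. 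For arbitrary $\Phi\in\Ltwo_\mS(\Omega_2)^4$ I would then approximate by $\Phi_k\in\mD(H^{(\kappa)})$, $\Phi_k\to\Phi$, and use the uniform bound $\|\chi_R\,U^{(\kappa)}(t)(\Phi-\Phi_k)\|_\mS\le\|U^{(\kappa)}(t)(\Phi-\Phi_k)\|_\mS=\|\Phi-\Phi_k\|_\mS$ together with the triangle inequality to get $\limsup_{T\to\infty}\frac{1}{2T}\int_{-T}^T\|\chi_R\,U^{(\kappa)}(t)\Phi\|_\mS^2\,\rd t\le 2\|\Phi-\Phi_k\|_\mS^2\to 0$ as $k\to\infty$. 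The only genuinely external ingredient is the absence of energy eigenvalues used in (ii); the rest is an assembly of the Rellich property and the RAGE theorem, and the main point to be careful about — rather than a deep obstacle — is to run the whole argument with the generator $\mS^{-1}H^{(\kappa)}$ in the weighted space $\Ltwo_\mS(\Omega_2)^4$, keeping track of the equivalence of the relevant norms and graph norms, instead of with $H^{(\kappa)}$ in $\Ltwo(\Omega_2)^4$.
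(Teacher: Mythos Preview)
Your proposal is correct and follows essentially the same approach as the paper: use the Rellich property (Theorem~\ref{thm:rellichproperty}) to show that $\chi_R$ composed with the resolvent is compact, invoke the absence of eigenvalues in the non-extreme case (Proposition~\ref{prop}, \cite{Schmid}, \cite{WY06}), and conclude via the RAGE theorem \cite{RS79}. If anything, you are more careful than the paper in consistently working with the generator $\mS^{-1}H^{(\kappa)}$ in $\Ltwo_\mS(\Omega_2)^4$ and in spelling out the density argument from $\mD(H^{(\kappa)})$ to all of $\Ltwo_\mS(\Omega_2)^4$; the paper compresses all of this into three sentences.
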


\begin{proof}
   It is well-known that in the non-extreme Kerr-Newman case the operator $H^{(\kappa)}$ has no eigenvalues (\cite{Schmid}, \cite{WY06}, see also Proposition~\ref{prop}).
   Let $\chi_R(x)$ be the characteristic function of $(-R,R)$.  
   Then, the Rellich property Theorem~\ref{thm:rellichproperty} implies that $\chi_R(H^{(\kappa)}+i)^{-1}$ is a compact operator.
   Therefore, the assertion follows from the RAGE theorem, see \cite{RS79}.
%   Hence there exists a sequence $(t_n)_n$ such that $t_n\to\infty$ and $U^{(\kappa)}(t_n)$ tends to zero weakly in $\Ltwo_\mS(\Omega_2)^4$ (see \cite[Lemma V.2.3]{LP89}). 
%   Then we have
%   \begin{align*}
%      \lim_{n\to\infty} 
%      \int_{-R}^R\int_0^\pi
%      \left\| U^{(\kappa)}(t_n) \Phi(x,\theta) \right\|_{\C^4}^2 \,dx\,d\theta
%      \, =\, 0.
%   \end{align*}
%   for any $R>0$. 
%
%   Assume that the above property does not hold.
%   Then there exist some $R>0$, $\Phi \in \hs$, $\varepsilon_0>0$ and $(t_{n_k})_k$ such that $t_{n_k} \rightarrow \infty$ and 
%   \begin{equation} \label{contradiction1}
%      \int_{-R}^R\int_0^\pi\left\| U^{(\kappa)}(t_{n_k})\Phi(x,\theta) \right\|_{\C^4}^2 \,dx\,d\theta  \, \geq \, \varepsilon_0>0.
%   \end{equation}
%   Theorem~\ref{thm:exp} implies that there exist a subsequence $(t_{n_{k_\ell}} )_\ell$
%   and   
%   $\Psi \in \Ltwo(\Omega_2)^4$ such that
%   \begin{equation} \label{contradiction2}
%      \lim_{\ell \rightarrow \infty} \int_{-R}^R\int_0^\pi\left\| U^{(\kappa)}(t_{n_{k_\ell}}) \Phi(x,\theta)-\Psi(x,\theta) \right\|_{\C^4}^2 \,dx\,d\theta=0.
%   \end{equation} 
%   Since $U^{(\kappa)}(t_n)$ tends to zero weakly in $\Ltwo_\mS(\Omega_2)^4$ as $t_n\to\infty$, $\Psi$ must be equal to $0$, in contradiction to \eqref{contradiction1}.
\end{proof}   

The decay of the partial waves implies the decay of the solutions of the original Cauchy problem~\eqref{eq:Cauchy}:
   
\begin{theorem}%[Local energy decay]
   Let $\Psi\in\Ltwo_\mS(\Omega_3)^4$.
   Its Fourier expansion 
   \begin{align*}
      \Psi(x,\theta,\varphi)
      = \sum_{\kappa \in \Z+(1/2)} 
      \e^{-i\kappa \varphi}\Psi_\kappa(x,\theta)
   \end{align*}
   converges strongly in $\Ltwo_\mS(\Omega_3)^4$ and 
   \begin{align*}
      \sum_{\kappa \in \Z+(1/2)} \|\Psi_\kappa(x,\theta)\|^2_\mS 
      < \infty.
   \end{align*}
   Then
   \begin{align*}
      \widetilde\Psi(x,\theta,\varphi,t) 
      = \sum_{\kappa \in \Z+(1/2)} 
      \e^{-i\kappa \varphi} U^{(\kappa)}(t) \Psi_\kappa(x,\theta)
   \end{align*}
   is the unique weak solution of 
   \begin{align*}
      \im\pdiff{t}\widetilde\Psi\, =\, \mS^{-1} H \widetilde\Psi,
      \qquad
      \widetilde\Psi(0)\,=\, \Psi
   \end{align*}
   \textup{(}cf. \eqref{eq:Cauchy}\textup{)} in $\Ltwo_\mS(\Omega_3)^4$ and satisfies
   \begin{align*}
      \lim_{T\to\infty} 
      \frac{1}{2T} \int_{-T}^T \left[\int_{-R}^R\int_0^\pi\int_0^{2\pi}
      \left\| \mS^{-1/2}\widetilde\Psi(x,\theta,\varphi,t) \right\|_{\C^4}^2 dx\,d\theta\,d\varphi \right]\,dt \, =\, 0
   \end{align*}
for any $R>0$.     
\end{theorem}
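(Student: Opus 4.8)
The plan is to reduce the assertion to the partial‑wave decay result Theorem~\ref{thm:decay} via the Fourier decomposition in $\varphi$; the only substantial point is the interchange of the sum over $\kappa$ with the time‑mean limit.

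\emph{Well–definedness, weak solution, uniqueness.} Each $U^{(\kappa)}(t)$ is unitary on $\Ltwo_\mS(\Omega_2)^4$, the functions $\e^{-\im\kappa\varphi}$, $\kappa\in\Z+\tfrac12$, are pairwise orthogonal in $L^2([0,2\pi))$, and $\mS$ does not depend on $\varphi$; hence the summands $\e^{-\im\kappa\varphi}U^{(\kappa)}(t)\Psi_\kappa$ are pairwise orthogonal in $\Ltwo_\mS(\Omega_3)^4$ with
\begin{align*}
   \bigl\| \e^{-\im\kappa\varphi}U^{(\kappa)}(t)\Psi_\kappa \bigr\|_\mS^2 \;=\; 2\pi\,\|\Psi_\kappa\|_\mS^2 .
\end{align*}
Since $\sum_\kappa\|\Psi_\kappa\|_\mS^2<\infty$ (which is just Parseval for $\Psi$), the series defining $\widetilde\Psi(t)$ converges in $\Ltwo_\mS(\Omega_3)^4$ for every $t$ with $\|\widetilde\Psi(t)\|_\mS=\|\Psi\|_\mS$; strong continuity of each $U^{(\kappa)}(\cdot)$ together with a uniformly small tail shows that $t\mapsto\widetilde\Psi(t)$ is norm‑continuous. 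Because every differential expression occurring in $\FH$ commutes with $\partial_\varphi$ — this is precisely how $H$ was split into the $H^{(\kappa)}$ in the proof of Theorem~\ref{thm:Cauchy} — the selfadjoint operator $\mS^{-1}H$ is reduced by the decomposition $\Ltwo_\mS(\Omega_3)^4=\bigoplus_\kappa \e^{-\im\kappa\varphi}\otimes\Ltwo_\mS(\Omega_2)^4$, its $\kappa$‑th part being $\mS^{-1}H^{(\kappa)}$. Consequently $\e^{-\im t\mS^{-1}H}\Psi=\sum_\kappa \e^{-\im\kappa\varphi}U^{(\kappa)}(t)\Psi_\kappa=\widetilde\Psi(t)$, and by Stone's theorem this is the unique weak solution of $\im\partial_t\widetilde\Psi=\mS^{-1}H\widetilde\Psi$, $\widetilde\Psi(0)=\Psi$.

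\emph{Reduction of the decay statement.} Since $\mS^{-1}$ is independent of $\varphi$, integrating $\|\mS^{-1/2}\widetilde\Psi\|_{\C^4}^2=\langle\widetilde\Psi,\mS^{-1}\widetilde\Psi\rangle_{\C^4}$ over $\varphi\in[0,2\pi)$ annihilates all cross terms, so by Tonelli (all integrands nonnegative)
\begin{align*}
   \frac{1}{2T}\int_{-T}^{T}\int_{-R}^{R}\int_0^\pi\int_0^{2\pi}\bigl\|\mS^{-1/2}\widetilde\Psi\bigr\|_{\C^4}^2\,d\varphi\,d\theta\,dx\,dt \;=\; 2\pi\sum_{\kappa}G_\kappa(T),
\end{align*}
where
\begin{align*}
   G_\kappa(T):=\frac{1}{2T}\int_{-T}^{T}\int_{-R}^{R}\int_0^\pi\bigl\|\mS^{-1/2}U^{(\kappa)}(t)\Psi_\kappa(x,\theta)\bigr\|_{\C^4}^2\,dx\,d\theta\,dt .
\end{align*}
Using the pointwise bound $\|\mS^{-1/2}g\|_{\C^4}\le\sqrt2\,\|g\|_{\C^4}$ (a consequence of Remark~\ref{remark:EquivNorm}) and applying Theorem~\ref{thm:decay} to $\Phi=\Psi_\kappa\in\Ltwo_\mS(\Omega_2)^4$, we get $G_\kappa(T)\to0$ as $T\to\infty$ for every fixed $\kappa$.

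\emph{The main point: passing the limit through the sum.} This is the only obstacle worth flagging. I would produce a $T$‑independent summable majorant: bounding the localized $x$‑integral by the full one, using $\|\mS^{-1/2}g\|_{\C^4}^2\le2\|g\|_{\C^4}^2$, then $\|\,\cdot\,\|_{L^2(\Omega_2)^4}^2\le2\|\,\cdot\,\|_\mS^2$ (since $\mS\ge\tfrac12\id$ pointwise) and the unitarity of $U^{(\kappa)}(t)$ on $\Ltwo_\mS(\Omega_2)^4$ yields $G_\kappa(T)\le C\,\|\Psi_\kappa\|_\mS^2$ with $C$ independent of $T$ and $\kappa$ (one may take $C=4$). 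Since $\sum_\kappa\|\Psi_\kappa\|_\mS^2<\infty$, dominated convergence with respect to the counting measure gives $\lim_{T\to\infty}\sum_\kappa G_\kappa(T)=\sum_\kappa\lim_{T\to\infty}G_\kappa(T)=0$, which together with the preceding identity is exactly the claimed limit. The remaining bookkeeping — the factor $2\pi$ from the $\varphi$‑integration and the equivalence constants between $\|\cdot\|_\mS$ and the unweighted $L^2$‑norm — is routine; the crux is the uniform‑in‑$T$ bound on $G_\kappa(T)$ furnished by unitarity of $U^{(\kappa)}$ on $\Ltwo_\mS$ and the boundedness of $\mS^{\pm1/2}$.
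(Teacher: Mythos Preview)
Your argument is correct and is exactly the route the paper intends: the paper gives no explicit proof of this theorem beyond the lead-in sentence ``The decay of the partial waves implies the decay of the solutions of the original Cauchy problem,'' and your reduction to Theorem~\ref{thm:decay} via the $\varphi$-Fourier decomposition, together with the dominated-convergence step to pass the limit through the $\kappa$-sum, is the natural way to fill in what the authors left implicit. The uniform bound $G_\kappa(T)\le C\|\Psi_\kappa\|_\mS^2$ coming from unitarity of $U^{(\kappa)}$ on $\Ltwo_\mS$ and the pointwise bounds $\tfrac12 I\le\mS\le\tfrac32 I$ is precisely the summable majorant needed, and you have identified and justified it correctly.
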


\begin{remark}
   The above results remain still valid if the electric potential $eQr$ in \eqref{eq:RadialOperator} is substituted by any real-valued $C^1$-function $q(r)$ defined on $[r_+,\infty)$ such that $\displaystyle{\lim_{r\rightarrow\infty}\frac{q(r)}{r}}$ exists and $q'(r)=O(1)$ as $r \rightarrow \infty$.
\end{remark}

%%% }}}

\appendix 

\section{Equivalent representations of the\\ time-independent Dirac operator} %%%{{{
\label{sec:alternativeDO}

\begin{remark}
   If we do not apply the transformations \eqref{eq:rTrafo} and \eqref{eq:PsiTrafo} to the wave function $\widehat\Psi$, the time-independent Dirac equation takes the form
   \begin{align}\label{eq:Cauchy1}
      \im \pdiff{t} \widehat\Psi
      \,=\, \mS^{-1} \widehat\FH\, \widehat\Psi
   \end{align}
   with $\widehat\FH\, =\, \widehat\FH_1 + \FH_2$ with $\FH_2$ as in Theorem~\ref{thm:Cauchy} and

   \begin{align*}
      \widehat\FH_1
      \, &=\, 
      \begin{pmatrix} \widehat\Fh_1 & \\ & -\widehat\Fh_1 \end{pmatrix}, \\
      \widehat\Fh_1
      \, &=\, 
      -\sigma_3 \frac{\Delta}{r^2+a^2} \im\,\pdiff{r}
      - 
      \frac{\sqrt{\Delta}}{r^2+a^2}
      \left( 
      \sigma_1 
      \im \Bigl( \pdiff{\theta} + \frac{\cot\theta}{2} \Bigr)
      + 
      \sigma_2 
      \frac{1}{\sin\theta}
      \im\,\pdiff{\phi}
      \right).
   \end{align*}
   $\widehat\FH$ is formally symmetric in the Hilbert space
   \begin{align*}
      \whLtwo_\mS := \Ltwo( (r_+,\infty)\times(0,\pi)\times[0,2\pi)) 
   \end{align*}
   with the scalar product
   \begin{align*}
      (\widehat\Psi,\, \widehat\Phi)_{\mS}^{\wedge} = 
   \int_{r_+}^\infty \int_0^\pi \int_0^{2\pi}
   \langle \widehat\Psi(r,\theta,\phi),\, \mS(r,\theta,\phi)\, \widehat\Phi(r,\theta,\phi) \rangle_{\C^4}
   \frac{\Delta}{r^2+a^2}\rd r\, \sin\theta\,\rd\theta\, \rd\phi.
   \end{align*}

\end{remark}

\begin{remark}
   In Theorem \ref{thm:Cauchy} we have equipped the Hilbert space $\Ltwo((-\infty,\infty)\times(0,\pi)\times(0,2\pi))$ with a scalar product
   $(\,\cdot\,,\,\cdot\,)_\mS$
   such that $\FH$ is formally symmetric.
   Alternatively, we can work with the usual scalar product, but then we have to define the Dirac operator in the following way:
   \begin{align*}
      \widetilde H_0 \Psi\, =\, \mS^{-1/2} \FH \mS^{-1/2} \Psi,
      \quad
      \mD(\widetilde H_0) = 
      \Cinfty ((-\infty,\infty) \times (0,\pi) \times [0, 2\pi) )^4
   \end{align*}
   on $\Ltwo( (-\infty,\infty) \times (0,\pi) \times [0,\pi) )^4$
   equipped with the usual scalar product
   \begin{align*}
      (\Psi,\, \Phi)\widetilde{} = 
      \int_{\R} \int_0^\pi \int_0^{2\pi}
      \langle \Psi(x,\theta,\phi),\, \Phi(x,\theta,\phi) \rangle_{\C^4}
      \rd x \rd\theta \, \rd\phi.
   \end{align*}

   Note that $\mS$ leaves the space $\Cinfty((-\infty,\, \infty)\times(0,\pi)\times[0,2\pi))^4$ invariant.

\end{remark}

%%%}}}

%%%{{{ Rellich's theorem
\section{Rellich's theorem}
\begin{theorem}[Rellich's theorem]
   Let $d\in\N$,  $K\subseteq\R^d$ compact and $\Omega$ a bounded open neighbourhood of $K$. 
   Further let $s>t\in\N$. 
   Then for each $c>0$ the set
   \begin{align*}
      F := \{u\in H^s(\R^d)\, :\, 
	 \|u\|_{H^s(\R^d)} \le c,\ \supp(u)\subseteq K\}
   \end{align*}
   is precompact in $H^t(\Omega)$, i.e.,
   for each sequence $(f_n)_n\subset F$ there exists an $f_0\in H^t(\R^d)$ and a subsequence $(f_{n_k})_k$ such that $f_{n_k}\rightarrow f_0$ in $H^t(\Omega)$.
\end{theorem}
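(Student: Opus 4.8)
The plan is to deduce the statement from the compactness of the embedding $\{u\in H^s(\R^d):\supp u\subseteq K\}\hookrightarrow H^t(\R^d)$, which I would prove by Fourier analysis, exploiting that on functions supported in the \emph{fixed} compact set $K$ the Fourier transform is uniformly well behaved. Write $\widehat{\,\cdot\,}$ for the Fourier transform and $B_\rho:=\{\xi\in\R^d:|\xi|\le\rho\}$. For $u\in F$ we have
\begin{align*}
   \widehat u(\xi)=(2\pi)^{-d/2}\int_K u(x)\,\e^{-\im x\cdot\xi}\,dx,
   \qquad
   \nabla_\xi\widehat u(\xi)=-\im(2\pi)^{-d/2}\int_K x\,u(x)\,\e^{-\im x\cdot\xi}\,dx,
\end{align*}
and Cauchy--Schwarz together with $\|u\|_{L^2}\le\|u\|_{H^s}\le c$ bounds $|\widehat u|$ and $|\nabla_\xi\widehat u|$ on all of $\R^d$ by constants depending only on $K$ and $c$. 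In particular every $\widehat u$, $u\in F$, is Lipschitz with a common constant, so the family $\{\widehat u:u\in F\}$ is uniformly bounded and equicontinuous on every ball $B_\rho$.

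Given a sequence $(f_n)_n\subset F$, I would apply the Arzel\`a--Ascoli theorem on the balls $B_1\subset B_2\subset\cdots$ and pass to a diagonal subsequence --- still denoted $(f_n)_n$ --- for which $(\widehat{f_n})_n$ converges uniformly on every bounded subset of $\R^d$. To check that this subsequence is Cauchy in $H^t(\R^d)$, fix $\varepsilon>0$ and split $\|f_n-f_m\|_{H^t(\R^d)}^2=\int_{\R^d}(1+|\xi|^2)^t\,|\widehat{f_n}(\xi)-\widehat{f_m}(\xi)|^2\,d\xi$ into the integrals over $B_\rho$ and over $\R^d\setminus B_\rho$. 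Because $t<s$ one has $(1+|\xi|^2)^t\le(1+\rho^2)^{t-s}(1+|\xi|^2)^s$ on $\R^d\setminus B_\rho$, so the tail integral is at most $(1+\rho^2)^{t-s}\|f_n-f_m\|_{H^s(\R^d)}^2\le 4c^2(1+\rho^2)^{t-s}$, which is $<\varepsilon/2$ once $\rho$ is chosen large enough, uniformly in $n,m$. For that fixed $\rho$, the integral over $B_\rho$ is bounded by $(1+\rho^2)^t\,|B_\rho|\,\sup_{\xi\in B_\rho}|\widehat{f_n}(\xi)-\widehat{f_m}(\xi)|^2$, which tends to $0$ as $n,m\to\infty$ by the uniform convergence obtained above. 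Hence $\limsup_{n,m\to\infty}\|f_n-f_m\|_{H^t(\R^d)}^2\le\varepsilon/2$; as $\varepsilon>0$ was arbitrary, $(f_n)_n$ is Cauchy in $H^t(\R^d)$, and by completeness it converges to some $f_0\in H^t(\R^d)$. Restricting along the continuous map $H^t(\R^d)\to H^t(\Omega)$ yields $f_n\to f_0$ in $H^t(\Omega)$, which proves the claim for this subsequence.

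The only genuinely delicate point is the first step --- converting the combination of the support condition and the $L^2$-bound into a uniform equicontinuity estimate for the Fourier transforms, which is exactly what makes Arzel\`a--Ascoli available. Everything after that is the routine splitting into low and high frequencies, the high-frequency contribution being absorbed solely by the hypothesis $t<s$.
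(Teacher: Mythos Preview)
Your argument is correct. In fact you prove slightly more than stated: the subsequence converges in $H^t(\R^d)$, not merely in $H^t(\Omega)$, and the restriction to $\Omega$ is then trivial. The paper itself does not give a proof of this appendix theorem at all; it simply cites Mizohata's textbook. So there is no ``same approach'' to compare against --- you have supplied a self-contained Fourier-analytic proof (uniform equicontinuity of $\widehat u$ from the fixed support, Arzel\`a--Ascoli plus diagonalisation, then the standard low/high-frequency split) where the paper defers entirely to the literature. Your route is the classical one and is exactly the kind of argument one finds behind such a citation.
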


\begin{proof}
   See, e.g., \cite{Miz73}.
   % Mizohata or Taylor or Rauch.
\end{proof}

%%%}}}

\bigskip

\begin{center}
\bf Acknowledgements
\end{center}
M. W. is grateful for the hospitality at Ritsumeikan University, Kusatsu.
The research work on this paper started during her visit at Ritsumeikan University, Kusatsu, supported by Open Research Center Project for Private Universities: matching fund subsidy from MEXT, 2004-2008.

 %%%}}}

\end{document}